\newcommand{\probname}[1]{\textsc{#1}\xspace}
\newcommand{\MaxTotal}{\probname{MaxTotal}}
\newcommand{\GreedyMax}{\textsf{GreedyMax}\xspace}
\newcommand{\GreedyLowCost}{\textsf{GreedyLowCost}\xspace}
\newcommand{\GreedyBestRatio}{\textsf{GreedyBestRatio}\xspace}
\newcommand{\GM}{\textsf{GM}\xspace}
\newcommand{\GLC}{\textsf{GLC}\xspace}
\newcommand{\GBR}{\textsf{GBR}\xspace}
\newcommand{\ILP}{\textsf{ILP}\xspace}
\newcommand{\QAPX}{\textsf{QAPX}\xspace}
\newcommand{\QGM}{\textsf{QGM}\xspace}
\newcommand{\QGLC}{\textsf{QGLC}\xspace}
\newcommand{\QGBR}{\textsf{QGBR}\xspace}
\newcommand{\totact}{total act.\xspace}
\newcommand{\ie}{i.\,e.}
\newcommand{\eg}{e.\,g.}
 \newcommand{\quarter}{1/4} %
 \newcommand{\eight}{1/8}  %
\let\doendproof\endproof
\renewcommand{\endproof}{\hfill\qed\doendproof}
\title{Evaluation of Labeling Strategies for Rotating Maps} \author{Andreas
  Gemsa, Martin N{\"o}llenburg, Ignaz Rutter }
\institute{Institute of Theoretical Informatics, Karlsruhe Institute
  of Technology, Germany }
\begin{document}
 \maketitle

\begin{abstract}
	We consider the following problem of labeling points in a dynamic map that allows rotation. We are given a set of points in the plane labeled by a set of mutually disjoint labels, where each label is an axis-aligned rectangle attached with one corner to its respective point. We require that each label remains horizontally aligned during the map rotation and our goal is to find a set of mutually non-overlapping \emph{active} labels for every rotation angle $\alpha \in [0,2\pi)$ so that the number of active labels over a full map rotation of $2\pi$ is maximized.
	
	We discuss and experimentally evaluate several labeling models that define additional consistency constraints on label activities in order to reduce flickering effects during monotone map rotation. We introduce three heuristic algorithms and compare them experimentally to an existing approximation algorithm and exact solutions obtained from an integer linear program. Our results show that on the one hand low flickering can be achieved at the expense of only a small reduction in the objective value, and that on the other hand the proposed heuristics achieve a high labeling quality significantly faster than the other methods. 
\end{abstract}

\section{Introduction}
Dynamic digital maps, in which users can navigate by continuously
zooming, panning, or rotating their personal map view, opened up a new
era in cartography and geographic information science (GIS) from
professional applications to personal mapping services on mobile
devices. The continuously animated map view adds a temporal dimension
to the map layout and thus many traditional algorithms for static maps
do not extend easily to dynamic maps. Despite the popularity and
widespread use of dynamic maps, relatively little attention has been
paid to provably good or experimentally evaluated algorithms for
dynamic maps.

In this paper we consider \emph{dynamic map labeling} for points, i.e., the
problem of deciding when and where to show labels for a set of point
features on a map in such a way that visually distracting effects
during map animation are kept to a minimum.
In particular, we study
rotating maps, where the mode of interaction is restricted to changing
the map orientation, e.g., to be aligned with the travel direction in a car navigation system.

\begin{figure}[tb]
  \centering
  \includegraphics[height=.2\textheight]{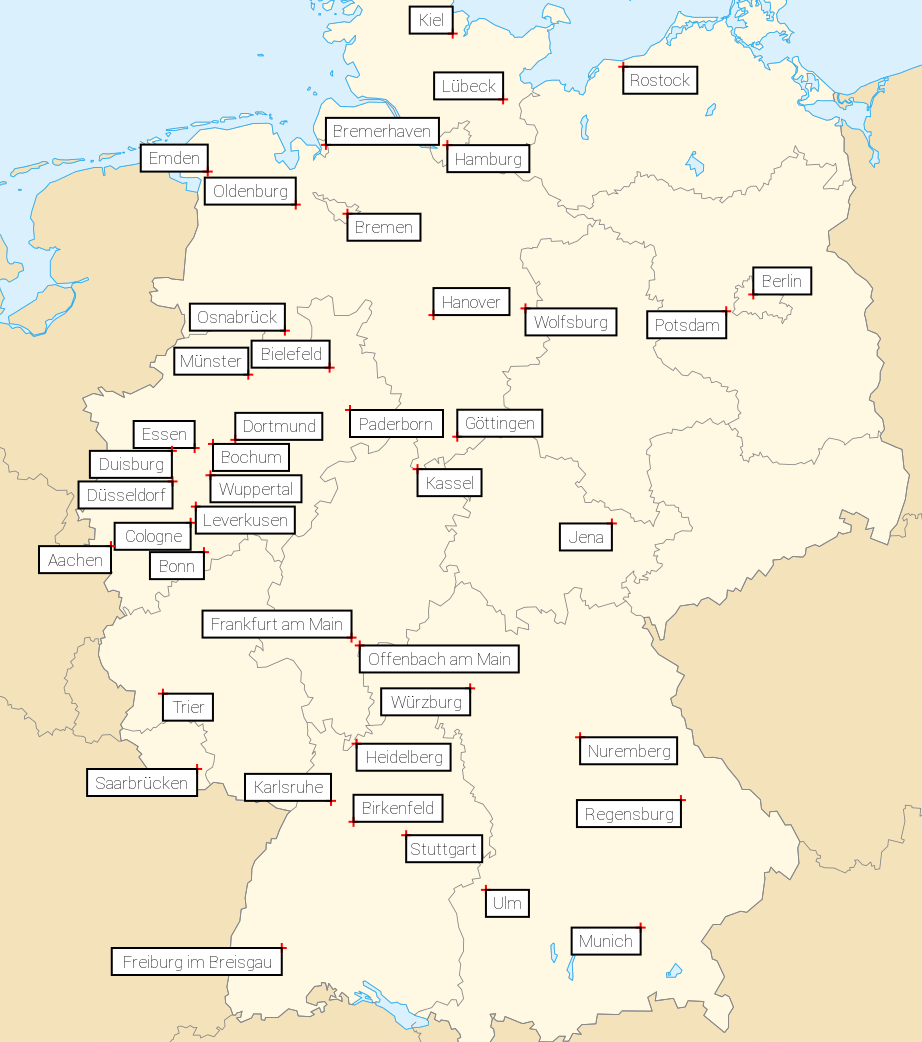}
  \hspace{.3cm}
  \includegraphics[height=.2\textheight]{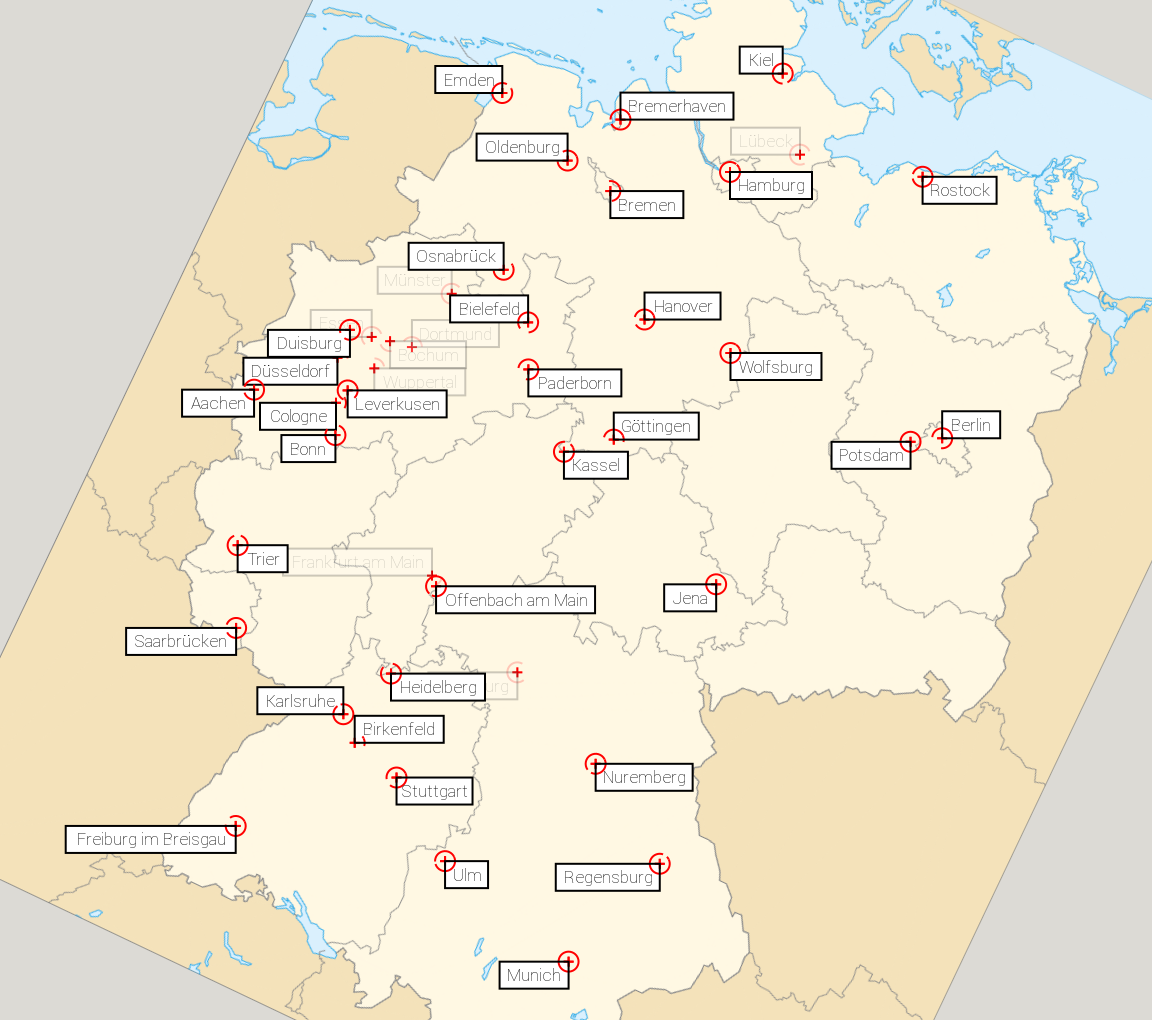}
  \caption{Instance with 43 labeled cities in Germany. Input labeling
    (left), rotated by $\sim 25^\circ$ (right).}%
  \footnotesize{Background picture is in public domain. Retrieved from Wikpedia
    \href{http://commons.wikimedia.org/wiki/File:Germany_localisation_map_2008.svg}{[Link]}}
  \label{fig:example}
\end{figure}

Been et al.~\cite{bdy-dml-06,bnpw-oarcd-10} defined a set of \emph{consistency desiderata} for labeling zoomable dynamic maps, which include that (i) labels do not \emph{pop} or \emph{flicker} during monotone zooming, (ii) labels do not \emph{jump} during the animation, and (iii) the labeling only depends on the current view and not its history. In our previous paper~\cite{gnr-clrm-w-11}, we adapted the consistency model of Been et al.\ to rotating maps, showed NP-hardness and other properties of consistent labelings in this model, and provided efficient approximation algorithms.

Similar to the (NP-hard) label number maximization problem in static map labeling~\cite{fw-ppalm-91}, the goal in dynamic map labeling is to maximize the number of visible or \emph{active} labels integrated over one full rotation of $2\pi$. The value of this integral is denoted as the \emph{total activity} and defines our objective function. Figure~\ref{fig:example} shows an example seen from two different angles. Without any consistency restrictions, we can select the active labels for every rotation angle $\alpha \in [0,2\pi)$ independently of any other rotation angles. Clearly, this may produce an arbitrarily high number of flickering effects that occur whenever a label changes from active to inactive or vice versa. Depending on the actual consistency model, the number of flickering events per label is usually restricted to a very small number. Our goal in this paper is to evaluate several possible labeling strategies, where a labeling strategy combines both a consistency model and a labeling algorithm. First, we want to evaluate the loss in total activity caused by using a specific consistent labeling model rather than an unrestricted one. Second, we are interested in evaluating how close to the optimum total activity our proposed algorithms get for real-world instances in a given consistency model.

\paragraph{Related Work.}
Most previous work on dynamic map labeling covers maps that allow
panning and
zooming, e.g.,~\cite{bdy-dml-06,bnpw-oarcd-10,nps-dosbl-10,okf-dlu-09,vtw-tcrld-2012};
there is also some work on labeling dynamic points in a static
map~\cite{bg-aafm-12,bg-lmpwtblslo-13}. As mentioned above, the
dynamic map labeling problem for rotating maps has first been
considered in our previous paper~\cite{gnr-clrm-w-11}. We
introduced a consistency model, and proved NP-completeness even for
unit-square labels.  For unit-height labels we described an efficient
\quarter-approximation algorithm as well as a PTAS. Yokosuka and
Imai~\cite{yi-ptalsmrm-13} considered the label size maximization
problem for rotating maps, where the goal is to find the maximum font
size for which all labels can be constantly active during
rotation. Finally Gemsa et al.~\cite{gnn-tbdml-isaac-13} studied a
trajectory-based labeling model, in which a locally consistent
labeling for a viewport moving along a given smooth trajectory needs
be computed. Their model combines panning and rotation of the map
view.

\paragraph{Our Contribution.}
In this paper we take a practical point of view on the dynamic map
labeling problem for rotating maps. In Section~\ref{sec:prelim} we
formally introduce the algorithmic problem and discuss our original
rather strict consistency model~\cite{gnr-clrm-w-11}, as well as two
possible relaxations that are interesting in
practice. Section~\ref{sec:algos} summarizes the known
\quarter-approximation algorithm~\cite{gnr-clrm-w-11}, introduces
three greedy heuristics (one of which is a \eight-approximation for
unit square labels), and presents a formulation as an integer linear
program (ILP), which provides us with optimal solutions against which
to compare the algorithms. Our main contribution is the experimental
evaluation in Section~\ref{sec:exper-eval}. We extracted several
real-world labeling instances from OpenStreetMap data and make them
available as a benchmark set. Based on these data, we evaluate both
the trade-off between the consistency and the total activity, and the
performance of the proposed labeling algorithms. The experimental
results indicate that a high degree of labeling consistency can be
obtained at a very small loss in activity. Moreover, our greedy
algorithms achieve a high labeling quality and outperform the running
times of the other methods by several orders of magnitude. We conclude
with a suggestion of the most promising labeling strategies for
typical use cases.

\section{Preliminaries}\label{sec:prelim}

In this section we describe a general labeling model for rotating maps with axis-aligned rectangular
labels. This model extends our earlier model~\cite{gnr-clrm-w-11}.

Let $M$ be an (abstract) map, consisting of a set $P=\{p_1, \dots, p_n\}$ of points in the plane
together with a set $L = \{\ell_1, \dots, \ell_n\}$ of pairwise disjoint, closed, and axis-aligned
rectangular labels in the plane. Each point $p_i$ must coincide with a corner of its corresponding label $\ell_i$;
we denote that corner (and the point $p_i$) as the \emph{anchor} of label $\ell_i$. Since each label
has four possible positions with respect to $p_i$ this widely used model is known in the literature as
the 4-position model (4P)~\cite{fw-ppalm-91}.

As $M$ rotates, each label $\ell_i$ in $L$ must remain horizontally aligned and anchored at~$p_i$.
Thus, new label intersections form and existing ones disappear during the rotation of $M$. We take the
following alternative perspective on the rotation of $M$. Rather than rotating the points, say
clockwise, and keeping the labels horizontally aligned we may instead rotate each label
counterclockwise around its anchor point and keep the set of points fixed. Both
rotations are equivalent in the sense that they yield exactly the same intersections of labels and
occlusions of points.

We consider all rotation angles modulo~$2\pi$.  For convenience we
introduce the interval notation $[a,b]$ for any two angles $a,b \in
[0, 2\pi]$. If $a \le b$, this corresponds to the standard meaning of
an interval, otherwise, if $a > b$, we define $[a,b] := [a,2\pi] \cup
[0,b]$.  For simplicity, we refer to any set of the form~$[a,b]$ as an
interval. We  define the length of an interval $I = [a, b]$ as $|I| =
b-a$ if $a \le b$ and $|I| = 2\pi - a + b$ if 
$a > b$.

A \emph{rotation} of $L$ is defined by a rotation angle $\alpha \in
[0,2\pi)$.  We define $L(\alpha)$ as the set of all labels, each
rotated by an angle of~$\alpha$ around its anchor point. A
\emph{rotation labeling} of $M$ is a function $\phi\colon L \times
[0,2\pi) \rightarrow \{0,1\}$ such that $\phi(\ell,\alpha) = 1$ if
label $\ell$ is visible or \emph{active} in the rotation of $L$ by
$\alpha$, and $\phi(\ell,\alpha) = 0$ otherwise.  We call a
labeling~$\phi$ \emph{valid} if, for any rotation $\alpha$, the set of
labels $L_\phi(\alpha) = \{\ell \in L(\alpha) \mid
\phi(\ell,\alpha)=1\}$ consists of pairwise disjoint labels.  If two
labels $\ell$ and $\ell'$ in $L(\alpha)$ intersect, we say that they
have a (soft) \emph{conflict} at $\alpha$, i.e., in a valid labeling
at most one of them can be active at $\alpha$. We define the set
$C(\ell,\ell') = \{\alpha \in [0,2\pi) \mid \ell \text{ and } \ell'
\text{ are in conflict at } \alpha\}$ as the \emph{conflict set} of
$\ell$ and $\ell'$.  Further, we call a contiguous range in $C(\ell,
\ell')$ a \emph{conflict range}. The begin and end of a maximal
conflict range are called \emph{conflict events}.

For a label
$\ell$ we call each maximal interval $I \subseteq [0, 2\pi)$
with~$\phi(\ell,\alpha) = 1$ for all~$\alpha \in I$ an \emph{active
  range} of label~$\ell$ and define the set $A_\phi(\ell)$ as the set of all active ranges of $\ell$ in $\phi$.   We call an active range where
both boundaries are conflict events a \emph{regular} active range.
Our optimization goal is to find a valid labeling $\phi$ that shows a maximum number of labels integrated over one full rotation from $0$ to $2\pi$. The value of this integral is called the \emph{total activity} $t(\phi)$ and can be computed as $t(\phi) = \sum_{\ell \in L} \sum_{I \in A_\phi(\ell)} |I|$. The problem of optimizing $t(\phi)$ is called  \emph{total activity maximization problem} (\MaxTotal).

A valid labeling is not yet consistent in terms of the definition of
Been et al.~\cite{bdy-dml-06,bnpw-oarcd-10}: while labels clearly do
not jump and the labeling is independent of the rotation history,
labels may still \emph{flicker} multiple times during a full rotation
from $0$ to $2\pi$, depending on how many active ranges they have in
$\phi$.  In the most restrictive consistency model, which avoids
flickering entirely, each label is either active for the full rotation
$[0,2\pi)$ or never at all. We denote this model as
\emph{0/1-model}. In our previous paper~\cite{gnr-clrm-w-11} we
defined a rotation labeling as consistent if each label has only a
single active range, which we denote here as the \emph{1R-model}. This
immediately generalizes to the \emph{$k$R-model} that allows at most
$k$ active ranges for each label. Analogously, the unrestricted model,
\ie, the model without restrictions on the number of active ranges per
label, is denoted as the \emph{$\infty$R-model}.

We may apply another restriction to our consistency models, which is
based on the occlusion of anchors.  Among the conflicts in set
$C(\ell,\ell')$ we further distinguish \emph{hard conflicts}, i.e.,
conflicts where label $\ell$ intersects the anchor point of label
$\ell'$.  If a labeling $\phi$ sets $\ell$ active during a hard
conflict with $\ell'$, the anchor of $\ell'$ is occluded. This may be
undesirable in some situation in practice, e.g., if every point in $P$
carries useful information in the map, even if it is unlabeled. Thus
we may optionally require that $\phi(\ell,\alpha) = 0$ during any hard
conflict of a label $\ell$ with another label $\ell'$ at angle
$\alpha$. Note that we can include other obstacles (e.\,g., important
landmarks on a map) which must not be occluded by a label in form of hard
conflicts. Note that a soft conflict is always a label-label conflict,
while a hard conflict is always a label-point conflict (in our
definition every label-point conflict induces also a label-label conflict). We showed earlier~\cite{gnr-clrm-w-11} that \MaxTotal is NP-hard in the 1R-model avoiding hard conflicts and presented approximation algorithms.

\section{Algorithmic Approaches}\label{sec:algos}
In this section we describe five algorithmic approaches for computing
consistent active ranges that we evaluate in our experiments.
Section~\ref{sec:heuristics-maxtotal} describes three simple greedy
heuristics. Then, we sketch in Section~\ref{sec:.25-apprx} our
\quarter-approximation algorithm which we described in more detail
in~\cite{gnr-clrm-w-11}.  It is based on the shifting technique by
Hochbaum and Maass~\cite{hm-ascpp-85}, where instances are decomposed
into small independent cells that are then solved optimally. Finally,
we give an ILP formulation in Section~\ref{sec:ilp:maxtotal} that that
we use primarily for evaluating the quality of the other solutions.

\subsection{Greedy Heuristics}
\label{sec:heuristics-maxtotal}
In this section we describe three new greedy algorithms to construct valid and consistent labelings with high total activity. These algorithms are conceptually simple and easy to
implement, but in general we cannot give quality guarantees
for the solutions computed by these algorithms. 

All three greedy algorithms follow the same principle of iteratively
assigning active ranges to all labels. The algorithm first initializes
a set~$L'$ with all labels in~$L$. Then it computes for each label $\ell$
its \emph{maximum active range} $I_{\max}(\ell)$, which is the active range of maximum length $|I_{\max}(\ell)|$ such that (i) $\ell$ is not active while in conflict with another active
label that was already considered by the algorithm, and (optionally) such that (ii) $\ell$ is not active while it has a hard conflict with another label.
Initially the maximum active range of each label is either the full interval $[0,2\pi]$ or the largest range that avoids hard conflicts.
Then the
algorithm repeats the following steps.  It selects and removes a label
$\ell$ from $L'$, assigns it the active range $I_{\max}(\ell)$, and updates
those labels in $L'$ whose maximum active range is affected by
the assignment of $\ell$'s active range. If we consider the $k$R-model with $k > 1$, we keep a counter for the number of selected active ranges and add another copy of $\ell$ with the next largest active range to $L'$ if the counter value is less than $k$.
The three algorithms differ only in the criterion that determines which label is
selected from $L'$ in each iteration.

The first algorithm we propose is called \GreedyMax.  In each step the
algorithm selects the label with the largest maximum active range
among all labels in~$L'$.  Ties are broken arbitrarily. The second
algorithm, \GreedyLowCost, determines for the maximum active range of
each label the cost of adding it to the solution. This means that for
each label $\ell \in L'$ with maximum active range $I_{\max}(\ell)$
the algorithm determines for all labels $\ell' \in L'$ that are in
conflict with $\ell$ during $I_{\max}(\ell)$ by how much their maximum
active range would shrink. The sum of this is the \emph{cost}
$c(\ell)$ of assigning the active range $I_{\max}(\ell)$ to
$\ell$. Among all labels in $L'$ \GreedyLowCost chooses the one with
lowest cost. Finally, the last algorithm, \GreedyBestRatio is a
combination of the two preceding ones. In each step the algorithm
chooses the label $\ell$ whose ratio $|I_{\max}(\ell)|/c(\ell)$ is
maximum among all labels in $L'$.  We conclude with a brief
performance analysis of our algorithms.

\newcommand{\thmgreedy}{In the $k$R-model with constant $k$ the
  algorithm \GreedyMax can be implemented to run in time $O(c n \cdot
  (c + \log n))$ and the algorithms \GreedyLowCost and
  \GreedyBestRatio can be implemented to run in time $O(c n \cdot (c^2
  + \log n))$, where $n$ is the number of labels and $c$ is the
  maximum number of conflicts per label in the input instance.  The
  space consumption of all algorithms is~$O(c n)$.}
\begin{theorem}
\thmgreedy
\label{thm:greedy}
\end{theorem}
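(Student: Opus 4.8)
The plan is to bound the total work by (i) counting the number of main-loop iterations, (ii) fixing per-label data structures that make a single iteration cheap, and (iii) charging the cost of an iteration to the at most $c$ labels whose maximum active range is affected by the current assignment. First I would observe that the loop runs $O(n)$ times: since $k$ is constant and every label contributes at most $k$ copies to $L'$, at most $kn = O(n)$ labels are ever selected. For preprocessing I would, for every label, collect its $O(c)$ conflict events, sort them cyclically around the anchor ($O(c\log c)$ per label, hence $O(cn\log n)$ in total as each pair contributes $O(1)$ events and thus $c=O(n)$), and store them in a balanced structure that supports marking a conflict range as \emph{blocking} and reading off the largest gap between blocking ranges, i.e.\ the current $I_{\max}(\ell)$, in $O(c)$ time. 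The labels of $L'$ are kept in a priority queue keyed by the selection criterion; initializing each $I_{\max}(\ell)$ as the full circle minus hard conflicts costs $O(c)$ per label.

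For \GreedyMax the key is simply $|I_{\max}(\ell)|$. One iteration performs a single extract-max in $O(\log n)$ and then touches only the labels conflicting with the selected $\ell$ during $I_{\max}(\ell)$, of which there are at most $c$. For each such neighbor I would insert the new blocking range, recompute its largest gap in $O(c)$, and update its key in $O(\log n)$, so the iteration costs $O(\log n + c(c+\log n)) = O(c(c+\log n))$. Over $O(n)$ iterations this yields the claimed $O(cn(c+\log n))$, and adding a copy with the next largest gap in the $k$R-model costs another $O(c+\log n)$ per selection, which is subsumed.

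For \GreedyLowCost and \GreedyBestRatio each label additionally stores its cost $c(\ell)$, defined as the sum over its at most $c$ conflicting labels of the shrinkage their $I_{\max}$ would suffer if $\ell$ were activated over $I_{\max}(\ell)$. Computing one such shrinkage amounts to inserting one hypothetical blocking range into a neighbor's gap structure and re-reading its largest gap, i.e.\ $O(c)$; summing over the at most $c$ neighbors gives $O(c^2)$ to (re)compute a single label's cost, and the initial pass over all labels is thus $O(c^2 n)$. An iteration extracts the best label in $O(\log n)$ and then, for each of the at most $c$ affected neighbors, recomputes its $I_{\max}$ ($O(c)$) and its cost ($O(c^2)$) and updates its key ($O(\log n)$), for a total of $O(\log n + c(c^2+\log n)) = O(c(c^2+\log n))$ per iteration and $O(cn(c^2+\log n))$ overall. \GreedyBestRatio differs only in storing $|I_{\max}(\ell)|/c(\ell)$ as the key, which does not change the asymptotics. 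Since every structure holds $O(c)$ items per label plus an $O(n)$-size queue, the space is $O(cn)$.

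The step I expect to require the most care is the cost bookkeeping for the latter two algorithms. When the selected label shrinks a neighbor $\ell'$, not only $c(\ell')$ changes but, in principle, the costs of $\ell'$'s own neighbors change as well, which threatens a second-order blow-up in the number of key updates per iteration. The point to verify is that it suffices to refresh only the directly affected neighbors' keys so that each iteration issues $O(c)$ priority-queue operations: validity of the output is guaranteed by the gap structures alone (active ranges never overlap regardless of cost accuracy), and the costs merely steer the greedy choice, so localizing the update is consistent with the heuristic's specification. Arguing this cleanly—rather than silently operating on stale costs—while keeping the per-iteration key-update count at $O(c)$ is the crux of matching the stated bound.
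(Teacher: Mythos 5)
Your proposal is correct and follows essentially the same route as the paper's proof: a global max-heap keyed by the selection criterion, $O(n)$ iterations (with the $k$R copies absorbed since $k$ is constant), an $O(c)$-time recomputation of each affected neighbor's maximum active range, an $O(c^2)$-time cost recomputation for \GreedyLowCost and \GreedyBestRatio, and $O(c\log n)$ heap updates per iteration, with space dominated by the $O(cn)$ stored conflict events. The only difference is cosmetic: the paper recomputes $I_{\max}$ by a plain linear sweep over the $O(c)$ conflict events rather than your sorted gap structure, which changes nothing asymptotically. The subtlety you single out as the crux --- that shrinking a neighbor $\ell'$ also invalidates the stored costs of $\ell'$'s own neighbors --- is resolved in the paper exactly as you propose, namely by refreshing only the $O(c)$ directly affected labels and letting second-order costs go stale; the paper does this silently, so your explicit justification (validity is enforced by the range structures alone, and the costs only steer the heuristic, which carries no quality guarantee) is a point in your write-up's favor rather than a gap.
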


\begin{proof}
  We begin by describing the time complexity for \GreedyMax and sketch
  how to adapt the proof for \GreedyBestRatio and \GreedyLowCost.

  For the initialization we need to compute the maximum active range
  for all labels which can be done in $O(cn)$ time. Note that this is
  only necessary in the hard-conflict model, since in the
  soft-conflict model the maximum active range is for all labels is
  the full rotation. To efficiently query for the label with longest
  maximum active range we maintain a maximum heap~$H$ in which we
  store all labels from the set $L'$ with the length of their maximum
  active range as key. Initially we need to add all labels from $L$
  into $H$, which requires $O(n \log n)$ time.

  In each step of the algorithm it first selects the label~$\ell$ with
  maximum active range still left in~$L'$.  Then, it needs to update
  the maximum active range of those labels in $L'$ that have a
  conflict with $\ell$.  Since we maintain all labels contained in
  $L'$ in the heap $H$, we can find and remove the label with maximum
  active range in $L'$ in $O(\log n)$ time.  To determine the new
  maximum active ranges for those labels in $L'$ that are in conflict
  with $\ell$, we conduct a simple linear sweep over $[0, 2\pi)$.
  Note that there are at most $O(c)$ conflict events and hence a
  single sweep requires $O(c)$ time. Finally, we need to update the
  value of the maximum active ranges in the global heap.  This
  requires $O(c\log n)$ time. Thus, a single step in the algorithm
  requires $O(c^2 + c\log n)$ time. Since there are $n$ steps, the
  claimed running time follows.

  For \GreedyLowCost and \GreedyBestRatio we use the same approach,
  but we store the labels with their cost and gain-cost ratio as key,
  respectively. The main difference compared to \GreedyMax is the
  necessity to compute the cost of selecting a maximum active range,
  which can be done straightforwardly in $O(c^2)$ time per
  label. After selecting a label and assigning it its maximum active
  range, the maximum active ranges of at most $O(c)$
  labels still left in $L'$ may change. For these labels, the
  algorithm needs to recompute the cost of selecting their maximum
  active ranges and update these values in the maximum heap.  In total
  a single step for \GreedyLowCost or \GreedyBestRatio requires $O(c^3
  + c \log n)$ time. The claimed time complexity follows.  The
  required space is dominated by the storage required to store for
  each label its relevant conflict events.  This takes~$O(cn)$ space.
\end{proof}

By using a more efficient encoding of the maximal disjoint intervals
that can be assigned to label~$\ell_i$ in a heap, the running time of
\GreedyMax can be further improved to~$O(cn \log n)$.

\begin{theorem}
  \GreedyMax can be implemented with time complexity $O(cn \log n)$
  and space requirement in $O(cn)$.
\end{theorem}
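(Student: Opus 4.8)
The plan is to eliminate the $O(c^2)$ factor in the per-step cost of Theorem~\ref{thm:greedy}, which arose because after each activation we recomputed, by a fresh linear sweep of cost $O(c)$, the maximum active range of each of the $O(c)$ affected labels. Instead I would maintain, for every label $\ell_i$, the collection of its currently available \emph{free intervals} (the maximal sub-intervals of $[0,2\pi)$ in which $\ell_i$ is still assignable, i.e.\ not blocked by an already-active conflicting label nor, in the hard-conflict model, by a hard conflict) and update these incrementally rather than from scratch.

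Concretely, I would store the free intervals of each $\ell_i$ in a balanced binary search tree ordered by angular position and augmented so that each node records the maximum free-interval length in its subtree; the root then reports $|I_{\max}(\ell_i)|$ in $O(1)$ time, while point location, insertion, and deletion each cost $O(\log c)$ since a label has only $O(c)$ free intervals. A single global max-heap, with a handle stored at each label so keys can be updated in $O(\log n)$ time, holds all labels of $L'$ keyed by $|I_{\max}(\ell_i)|$. \GreedyMax then repeatedly extracts the maximum-key label $\ell$, assigns it its longest free interval $I_{\max}(\ell)$, and for each of the $O(c)$ labels $\ell'$ in conflict with $\ell$ subtracts the blocked region $C(\ell,\ell') \cap I_{\max}(\ell)$ from $\ell'$'s tree, afterwards refreshing $\ell'$'s key in the global heap. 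In the $k$R-model with constant $k$ we additionally re-insert $\ell$ with its next-longest free interval, which changes the bounds only by the constant factor $k$.

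The crux of the argument is an amortized bound on the total number of free-interval updates. I would argue that each label $\ell'$ starts with $O(c)$ free intervals and that over the whole execution only $O(c)$ blocked pieces are ever subtracted from it, a constant number for each of its $O(c)$ conflicts, since a conflict with $\ell$ is blocked only when $\ell$ becomes active and $C(\ell,\ell')$ consists of a constant number of conflict ranges. Because subtracting one contiguous blocked piece from a set of disjoint intervals creates at most one new interval (it can only shrink, split, or destroy intervals), the number of free intervals of $\ell'$ ever in existence stays $O(c)$; charging each deletion to the creation of the interval it removes bounds the total number of tree operations on $\ell'$ by $O(c)$, hence $O(cn)$ overall. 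Each such operation costs $O(\log c)$ in the local tree and triggers $O(\log n)$ work to refresh the key in the global heap, so the main loop runs in $O(cn\log n)$ time. Initialization---computing the initial free intervals, building the $n$ local trees, and building the global heap---fits within $O(cn\log n)$ as well, and the storage is dominated by the $O(c)$ free intervals per label, giving $O(cn)$ space.

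The step I expect to be the main obstacle is precisely this amortized counting, and in particular justifying that a single subtraction touching several free intervals at once does not blow up the bound: although one contiguous blocked piece may destroy many free intervals in a single stroke (its two endpoints shrink or split the boundary intervals while every interval strictly inside is removed), each destroyed interval was created exactly once, so a charging argument keeps the aggregate number of interval operations linear in the number of conflict events, $O(cn)$, and the overall running time at $O(cn\log n)$.
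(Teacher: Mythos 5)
Your proposal is correct and takes essentially the same route as the paper's own proof: per-label structures storing the maximal assignable (free) intervals, a global max-heap keyed by the length of each label's longest interval, and an amortized charging argument that every interval is created and destroyed at most once, so the total number of interval operations per label is $O(c)$ and the overall running time is $O(cn\log n)$ with $O(cn)$ space. The only difference is an implementation detail: the paper keeps two structures per label (a max-heap $H_i$ for lengths and a search tree $T_i$ for positions), whereas you merge them into a single position-ordered balanced tree augmented with subtree-maximum lengths, which is functionally equivalent.
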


\begin{proof}
  Similar to the proof of Theorem~\ref{thm:greedy}, we maintain a
  global heap that contains the maximum possible active range for each
  label.  In the following we describe the modified representation of
  the possible active ranges for the individual labels.

  We maintain for \emph{each} label~$\ell_i \in L$ a maximum
  heap~$H_i$ that maintains all maximal disjoint intervals the
  label~$\ell_i$ can be assigned as active range. Further, \emph{each}
  label also maintains a balanced binary tree $T_i$ on the same set of
  intervals, \ie, we store the left and right endpoints of the
  intervals in $T_i$. Should one of the intervals span over $2\pi$, we
  split it into two intervals at $2\pi$.  Since there are $O(c)$
  conflict events per label, both $T_i$ and $H_i$ can contain at most
  $O(c)$ elements.

  For the initialization we first need to compute all possible maximal
  active ranges for each label which can be done in $O(cn)$
  time. Initializing the global heap can be done as before in $O(n
  \log n)$ time, and since each label specific heap~$H_i$ and binary
  tree $T_i$ can contain at most $O(c)$ elements, their initialization
  requires $O(cn \log c)$ time in total.

  Now, when the algorithm has chosen a label $\ell$ with maximum
  active range~$A(\ell)$ with left and right endpoints~$a$ and~$b$,
  respectively, it needs to update the maximum active range of each
  label $\ell_i$ that is in conflict with $\ell$.  So, for each label
  $\ell_i$ in conflict with~$\ell$, we query $\ell_i$'s binary search
  tree $T_i$ with $a$ and $b$. This allows us to obtain all $k$
  maximal intervals of $\ell_i$ that partly overlap with, or are
  completely contained in $A(\ell)$. The intervals completely
  contained in $A(\ell)$ need to be removed from $T_i$ and $H_i$,
  while the intervals (at most two) that are only partly contained in
  $A(\ell)$ are shrunk or split.

  The query on the binary search tree requires $O(k + \log c)$ time,
  where $k$ is the number of reported intervals.  Updating both $T_i$
  and $H_i$ requires then $O(\log c)$ time for the intervals that get
  shrunk (or split).  Deleting the elements is more costly, but since
  in both the binary tree as well as the heap there can be at most
  $O(c)$ elements, and we insert and remove each element at most once,
  we can conclude that inserting and deleting requires per binary tree
  and heap each in total at most $O(c \log c)$ time. Since there are
  $O(n)$ heaps and trees, this requires in total $O(cn \log c)$ time.

  We now summarize these results. The initialization can be done in
  $O(cn \log n)$ time. In each step we require $O(\log n)$ time to
  determine and remove the label $\ell$ with maximum active range from
  the global heap.  Then, we shrink or split the intervals of labels
  that are in conflict with $\ell$ in $O(c \log c)$ time.  We need to
  update the global heap since the maximum active range of the $O(c)$
  labels that are in conflict with $\ell$ might have changed.  This
  update requires $O(c \log n)$ time.  Finally, we need to repeat this
  step $O(n)$ times, yielding $O(cn (\log c + \log n)) = O(cn \log n)$
  time.  As stated above, the insertion and deletion into the label's
  own binary trees and maximum heaps requires in total $O(cn \log n)$
  time, yielding a total time complexity of $O(cn \log n)$.  The space
  consumption is dominated by the~$O(n)$ trees, each of size~$O(c)$.
  Thus the algorithm requires~$O(cn)$ space.
\end{proof}

\paragraph{Approximation for Unit Square Labels.}
Although it was stated at the beginning of the
Section~\ref{sec:heuristics-maxtotal} that, in general, we cannot
prove any quality criteria for the presented algorithms, for the
special case that labels are unit squares, we can show that \GreedyMax
is an approximation algorithm with approximation ratio~$1/8$.  This is
due to a result by Gemsa et al.~\cite[Lemma 1 and Lemma 2]{gnn-tbdml-isaac-13}
in which the authors consider a similar problem. The results of both
Lemmas imply the following.

\begin{lemma}[\hspace{0.001mm}\cite{gnn-tbdml-isaac-13}, Lemmas 1 and 2]
  \label{lem:gemsaetal}
  Let~$L$ be a set of unit square labels and let~$\ell \in L$.  For
  any rotation angle~$\alpha \in [0, 2\pi)$ there are at most eight
  pairwise disjoint unit square labels in~$L(\alpha)$ that overlap
  with~$\ell$.
\end{lemma}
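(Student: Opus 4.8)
Since every label in $L(\alpha)$ is rotated by the same angle $\alpha$ around its own anchor, all squares under consideration—including $\ell$—have identical orientation. Applying the rigid rotation by $-\alpha$ to the whole configuration preserves both overlaps and disjointness, so I may assume all labels are axis-aligned unit squares and, after translating, that $\ell = [-\tfrac12,\tfrac12]^2$. A unit square $s$ with center $c$ satisfies $s \cap \ell \neq \emptyset$ iff $\|c\|_\infty \le 1$, so the center of every label overlapping $\ell$ lies in $Q := [-1,1]^2$, and consequently $s \subseteq R := [-\tfrac32,\tfrac32]^2$. This reduction to a single fixed orientation is what makes a clean packing argument possible.

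Let $s_1,\dots,s_k$ be pairwise (interior-)disjoint unit squares, each distinct from $\ell$ and overlapping $\ell$; the goal is $k \le 8$. Suppose instead $k \ge 9$. Then any nine of them lie in $R$, are interior-disjoint, and have total area $9 = \mathrm{area}(R)$, so these nine squares tile $R$ up to measure zero. The key step is that a tiling of the axis-aligned $3\times 3$ square $R$ by axis-aligned unit squares is forced to be the regular grid: any tile covering an interior point of the left edge $x=-\tfrac32$ must have its left side on that edge, so the left column consists of exactly the three tiles $[-\tfrac32,-\tfrac12]\times I$ with $I\in\{[-\tfrac32,-\tfrac12],\,[-\tfrac12,\tfrac12],\,[\tfrac12,\tfrac32]\}$; peeling off columns from left to right and repeating the same observation pins down all nine tiles. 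In particular the central tile is exactly $[-\tfrac12,\tfrac12]^2=\ell$, contradicting that every $s_i$ is distinct from $\ell$. Hence $k\le 8$, and the bound is tight: the four edge-adjacent and four corner-adjacent grid neighbours of $\ell$ each touch $\ell$ and are pairwise interior-disjoint, giving eight labels overlapping $\ell$.

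I expect the difficulty to be twofold. First, the tiling-uniqueness step, while intuitively obvious, requires the edge-peeling induction to be executed carefully—one must argue that a unit square inside $R$ covering a boundary point of an edge is forced to align with that edge, and keep the half-open bookkeeping of which tile covers each shared boundary consistent. Second, and more delicately, the constant $8$ is tight only under the correct reading of the hypotheses: \emph{overlaps} must mean nonempty intersection (so the eight merely touching neighbours are counted), \emph{disjoint} must mean interior-disjoint (so two adjacent neighbours sharing an edge still count as disjoint), and $\ell$ itself (and any degenerate label coinciding with it) must be excluded from the family. Pinning down exactly these conventions is what separates the sharp bound $8$ from the naive area bound $9$, and is where I would be most careful; the cited Lemmas~1 and~2 of~\cite{gnn-tbdml-isaac-13} presumably encapsulate precisely this orientation-normalization and packing estimate.
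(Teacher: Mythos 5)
Your proof is correct, but note that the paper itself contains no proof of this lemma to compare against: the statement is imported wholesale as a citation to Lemmas~1 and~2 of Gemsa et al.~\cite{gnn-tbdml-isaac-13}, and the surrounding text only uses its conclusion to derive the $1/8$-approximation guarantee for \GreedyMax. What you have produced is therefore a genuinely different route --- a self-contained replacement for the external reference. Your argument (de-rotate the whole configuration so that every label in $L(\alpha)$ becomes axis-aligned; confine every square meeting $\ell$ to the $3\times 3$ box $R$ centered on $\ell$; then use the exact area count together with uniqueness of the unit-square tiling of $R$ to force a hypothetical ninth square to coincide with $\ell$) is a clean packing argument, and the peeling induction for tiling uniqueness is sound: any square inside $R$ touching the left edge must sit flush against it, three interior-disjoint unit intervals covering a length-$3$ segment must partition it, and the argument recurses column by column.

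Two points should be tightened. First, ``tile $R$ up to measure zero'' must be upgraded to ``cover $R$ exactly'' before peeling, since that argument needs every point of $\partial R$ to be covered; this is easy: the union of the nine closed, interior-disjoint squares is a closed subset of $R$ of measure $9 = \mathrm{area}(R)$, so its complement in $R$ is open with measure zero, hence empty. Second, the convention caveats in your last paragraph are not pedantry but load-bearing: under your reading (interior-disjointness, overlap $=$ nonempty closed intersection), the family $\{\ell\}$ together with its eight grid neighbours consists of nine pairwise interior-disjoint labels all intersecting $\ell$, so the bound of eight is simply false unless $\ell$ itself (and any label geometrically coinciding with it) is excluded from the count --- your proof does exclude it, and that reading matches how the lemma is actually deployed in the approximation analysis, where the counted labels are those that an optimal solution could keep active while conflicting with $\ell$. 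It is also worth observing that your reading is the weakest one under which the lemma must be proved: under the paper's own stricter definitions (simultaneously active labels are pairwise disjoint as closed sets), any family counted by the lemma is in particular pairwise interior-disjoint, so your bound applies a fortiori; indeed, with full disjointness the centers have pairwise $L_\infty$-distance strictly greater than $1$ and lie in $[-1,1]^2$, so a pigeonhole over its four unit cells would even give the bound four. Since the lemma claims only ``at most eight,'' your argument suffices for every consistent interpretation.
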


Since, in each step of the \GreedyMax algorithm, we select the label
$\ell$ with maximum active range, we can conclude by
Lemma~\ref{lem:gemsaetal} that the cost of choosing $\ell$ is at most
eight times its maximum active range.  We summarize this in the
following.

\begin{corollary}
  \GreedyMax is a \eight-approximation algorithm for \MaxTotal in the
  case that all labels are unit squares.
\end{corollary}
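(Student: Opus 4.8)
The plan is to prove the corollary by establishing that \GreedyMax is a greedy algorithm whose total cost over all selections is bounded by a constant factor of the optimum, where the constant comes directly from Lemma~\ref{lem:gemsaetal}. The key structural fact is that when \GreedyMax selects a label $\ell$ with maximum active range $I_{\max}(\ell)$, every conflicting label whose active range shrinks loses measure only on the portion where it overlaps $\ell$ in $L(\alpha)$; by Lemma~\ref{lem:gemsaetal}, at any single angle $\alpha$ at most eight pairwise disjoint unit squares overlap $\ell$, so the total ``damage'' inflicted by selecting $\ell$ is at most eight times $|I_{\max}(\ell)|$.

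First I would set up a charging argument. Let $\phi$ be the labeling produced by \GreedyMax and let $\phi^{*}$ be an optimal valid labeling, with total activities $t(\phi)$ and $t(\phi^{*})$. For each label $\ell$ selected by the greedy algorithm at the moment of selection, I would account for the activity it contributes, namely $|I_{\max}(\ell)|$ at that step, and separately bound the activity that selecting $\ell$ removes from the remaining candidates. The crucial observation, justified by Lemma~\ref{lem:gemsaetal}, is that integrating over the chosen active range $I_{\max}(\ell)$, the total length by which the maximum active ranges of other labels can shrink is at most $8\,|I_{\max}(\ell)|$: at each angle in $I_{\max}(\ell)$ at most eight disjoint unit squares can be in conflict with $\ell$, so the pointwise loss summed over conflicting labels is at most eight, and integrating gives the factor $8\,|I_{\max}(\ell)|$.

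From here the argument is a standard greedy bound. I would argue that any active range present in the optimal solution $\phi^{*}$ is either realized (at least partially) in $\phi$ or was ``blocked'' because some greedy choice $\ell$ shrank the relevant candidate range; in the latter case that lost activity is charged to the responsible greedy selection $\ell$. Since the greedy rule always picks the label with the \emph{largest} currently available maximum active range, each selected $\ell$ has $|I_{\max}(\ell)|$ at least as large as the candidate range of any label it blocks at that step, so the activity charged against $\ell$ (bounded by $8\,|I_{\max}(\ell)|$ via the lemma) dominates the optimal activity that was displaced. Summing over all greedy steps yields $t(\phi^{*}) \le 8\,t(\phi)$, which is exactly the $1/8$-approximation claim.

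The main obstacle I anticipate is making the charging precise across the continuum of rotation angles and across the multiplicity of active ranges, since each label contributes a union of intervals rather than a single scalar. In particular, one must argue carefully that the pointwise bound of eight disjoint conflicting squares (Lemma~\ref{lem:gemsaetal}) integrates correctly to a factor-eight bound on total displaced activity, and that the greedy ``largest first'' ordering guarantees every unit of optimal activity is charged to a greedy selection at least as long as itself. Handling the boundary interaction between partially overlapping ranges and ensuring no optimal activity is double-counted or left uncharged is the delicate part; once the integral version of the eight-disjoint-squares bound is combined with the greedy maximality invariant, the factor $1/8$ follows directly.
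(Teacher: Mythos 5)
Your plan follows the same route as the paper's own (two--sentence) proof: charge activity to each greedy selection and bound the charge per selection by $8\,|I_{\max}(\ell)|$ using Lemma~\ref{lem:gemsaetal}. There is, however, a genuine flaw in how you invoke the lemma. Twice you assert that the \emph{total} amount by which the maximum active ranges of the remaining candidates in $L'$ shrink is at most $8\,|I_{\max}(\ell)|$, ``because at each angle at most eight disjoint unit squares can be in conflict with $\ell$''. Lemma~\ref{lem:gemsaetal} bounds only the cardinality of a \emph{pairwise disjoint} family of squares overlapping $\ell$, and the labels waiting in $L'$ are in general not pairwise disjoint. Take $m$ labels whose anchors lie in a tiny disk near $\ell$'s anchor: they overlap each other at every rotation angle, they can all be in conflict with $\ell$ on a common subinterval $J \subseteq I_{\max}(\ell)$, and the selection of $\ell$ shrinks all $m$ of their maximum active ranges by $|J|$ at once. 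The summed ``pointwise loss over conflicting labels'' is then $m$, not $8$, so your central quantitative claim is false as stated (it would also wrongly imply that the cost $c(\ell)$ used by \GreedyLowCost is at most $8\,|I_{\max}(\ell)|$).

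What makes the argument work --- and what the paper's terse proof implicitly relies on --- is that one must charge only \emph{optimal} activity, never candidate shrinkage. In the optimal labeling $\phi^*$ the labels active at any fixed angle $\alpha$ are pairwise disjoint (validity), so Lemma~\ref{lem:gemsaetal} does apply to them: at each $\alpha \in I_{\max}(\ell)$ at most eight labels that are active in $\phi^*$ at $\alpha$ can be blocked by $\ell$, and if $\ell$ itself is active in $\phi^*$ at $\alpha$ then, again by validity, no conflicting label is. Integrating over $I_{\max}(\ell)$ bounds the optimal activity resolved by this selection (realized by $\ell$ or blocked by $\ell$) by $8\,|I_{\max}(\ell)|$, and summing over all selections gives $t(\phi^*) \le 8\,t(\phi)$. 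Your third paragraph describes exactly this charging, but it imports its numerical bound from the invalid claim of your second paragraph rather than from the disjointness of simultaneously active labels in $\phi^*$, which is the single fact that licenses the use of the lemma here. (The loose end you flag yourself --- optimal activity that is neither realized nor blocked because a label's available range was fragmented before its selection, an issue in the 1R-model --- is real, but the paper's proof is silent on it as well.)
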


\subsection{\quarter-Approximation Algorithm}
\label{sec:.25-apprx}

Our previous \quarter-approximation~\cite{gnr-clrm-w-11} is based on
the \emph{line stabbing} or \emph{shifting} technique by Hochbaum and
Maass~\cite{hm-ascpp-85}, which has been applied before to
\emph{static} labeling problems for (non-rotating) unit-height
labels, \eg,~\cite{aks-lpmis-98,bg-aafm-12,ksw-pslsl-99}. We sketch the idea
of the algorithm for unit squares, but note that it applies in a
similar way to rectangular labels with bounded aspect ratio and that
it can be generalized to a PTAS~\cite{gnr-clrm-w-11}.  The first step
of the algorithm is to subdivide the plane into square grid cells with
side length~$2\sqrt{2}$. Each cell is addressed by a row and column
index. We obtain a partition of the initial instance into four
different subsets by deleting the points in every other row and every
other column. In each subset the cells now have a distance to each
other that is at least $2\sqrt{2}$ and thus, labels that are in
different cells \emph{cannot} intersect each other; see
Figure~\ref{fig:qapx}.  Note that for hard conflicts we need to
consider the area within distance $\sqrt{2}$ of each cell.  If we
solve all four subsets optimally, at least one of the solutions is a
\quarter-approximation for the entire instance due to the pigeon-hole
principle.

\begin{figure}[tb]
  \centering

  \includegraphics[page=3, width=.33\textwidth]{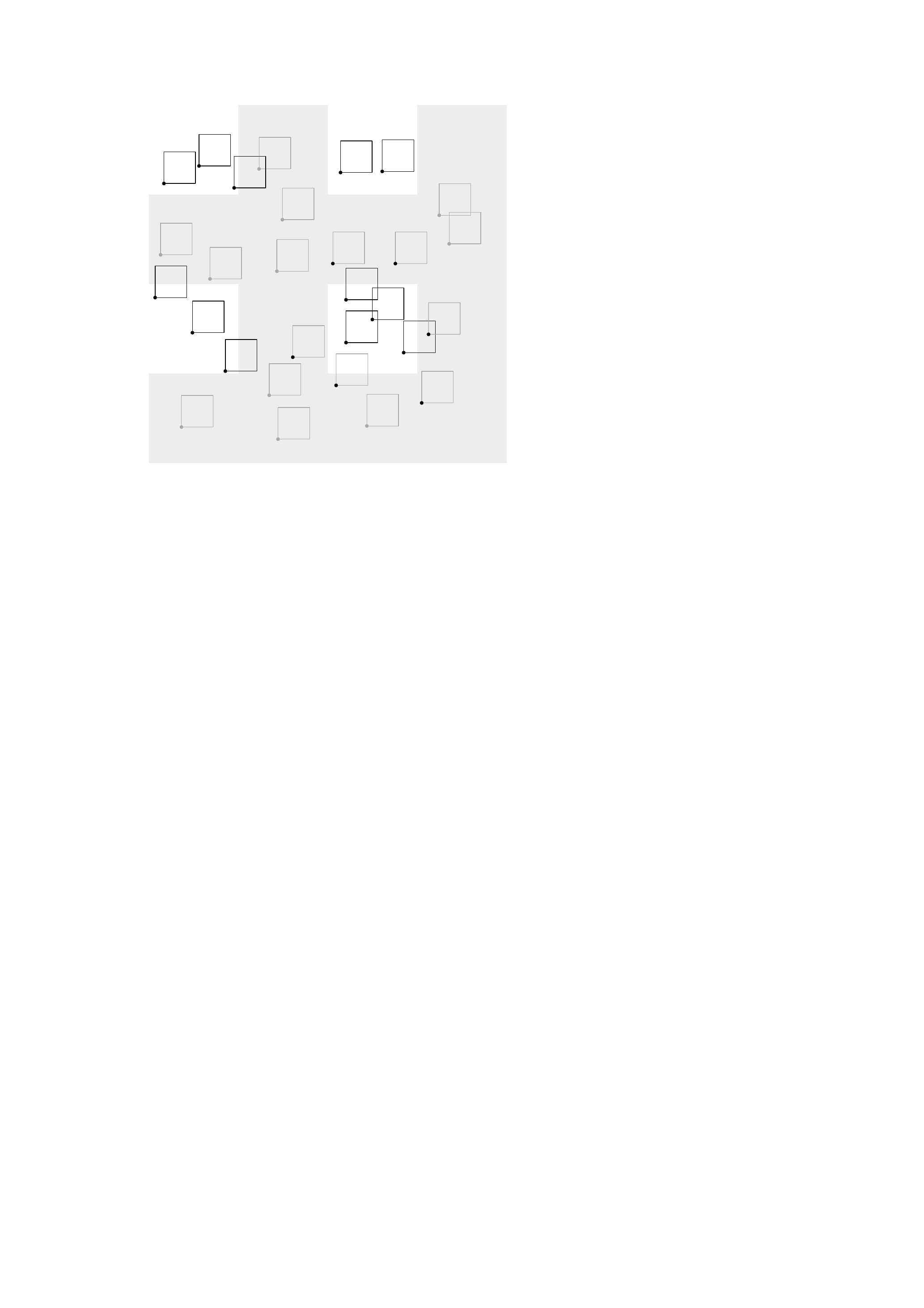}
  \caption{Plane divided into grid cells. Cells of one subinstance are marked white. The blue area has distance at most $\sqrt{2}$ to the cell~(3,~2).}
  \label{fig:qapx}
\end{figure}

To find the optimal solution for a single grid cell we observed that
there can only be a constant number of labels in each grid
cell~\cite{gnr-clrm-w-11}. Further, we observed that this implies that
also the number of conflict events in each grid cell is constant which
int combination with \cite[Lemma 4]{gnr-clrm-w-11} means that the
optimal solution can be obtained by a simple brute force approach in
$O(1)$ time per cell, which results in an overall running time of $O(n
\log n)$.

\subsection{Integer Linear Program}
\label{sec:ilp:maxtotal}
In this section we present an ILP-based approach to find optimal
solutions for \MaxTotal. This is justified since \MaxTotal is NP-hard
and we cannot hope for an efficient algorithm unless $P = NP$. We note
that the same ILP formulation can also be used in the
\quarter-approximation algorithm to compute an optimal solution within
each grid cell. The ILP formulation given is similar to that of Gemsa
et al.~\cite{gnn-tbdml-isaac-13}.

The key idea of the ILP presented here is to determine regular active
ranges induced by the ordered set of all conflict events. Our model
contains for each label $\ell$ and each interval $I$ a binary decision
variable, which indicates whether or not $\ell$ is active during~$I$.
We add constraints to ensure that (i) no two conflicting labels are
active at the same time within their conflict range and (ii) at most
$k$ disjoint contiguous active ranges can be selected for each label
as required in the $k$R-model.

\paragraph{Model.} For simplicity we assume in this section that the
length of each conflict range is strictly larger than~0. This
assumption is not essential for our ILP formulation, but makes the
description easier.

Let $E$ be the ordered set of conflict events that also contains~0
and~$2\pi$, and let $E[j]$ be the interval between the $j$-th and the
$(j+1)$-th element in $E$. We call such an interval $E[j]$ an
\emph{atomic interval} and always consider its index $j$ modulo
$|E|-1$. For each label $\ell_i \in L$ and for each atomic interval
$E[j]$ we introduce two binary variables $x_i^j$ and $b_i^j$ to our
model. We refer to the variables of the form $x_i^j$ as \emph{activity
  variables}. The intended meaning of $x_i^j$ is that its value is~1
if and only if the label $\ell_i$ is active during the $j$-th atomic
interval; otherwise $x_i^j$ has value~0. We use the binary variables
$b_i^j$ to indicate the start of a new active range and to restrict
their total number to $k$. This is achieved by adding the following
constraints to our model.
\begin{align}
  x_i^j - b_i^j & \leq x_{i}^{j-1} && \forall \ell_i \in L \quad \forall
  j \in \{0, \dots, |E|-2\} 
  \label{eq:ilp:1}\\
  \sum_{0 \leq j \leq |E|-2} b_i^j & \leq k && \forall \ell_i \in L
  \label{eq:ilp:3}
\end{align}
The effect of constraint~\eqref{eq:ilp:1} is that it is only possible
to start a new active range for label $\ell_i$ with atomic interval
$E[j]$ (i.e., $x_i^{j-1} = 0$ and $x_i^j=1$) if we account for that
range by setting $b_i^j = 1$. Due to constraint~\eqref{eq:ilp:3} this
can happen at most~$k$ times per label. We can also allow arbitrarily
many active ranges per label as in the $\infty$R-model by completely
omitting the variables $b_i^j$ and the above constraints.

It remains to guarantee that no two labels can be active when they are
in conflict. This can be done straightforwardly since we can compute
for which atomic intervals two labels are in conflict and we ensure
that not both activity variables can be set to~1. More specifically,
for every pair of labels~$\ell_i, \ell_k$ and for every atomic
interval $j$ during which they are in conflict, we add the constraint
 \begin{equation} x_i^j + x_k^j \leq
  1. \label{eq:ilp:2}\end{equation}

Optionally, incorporating hard conflicts can also be done easily as a
hard conflict simply excludes certain atomic intervals from being part
of an active range.  We determine for each label all such atomic
intervals in a preprocessing step and set the corresponding activity
variables to~0.

Among all feasible solutions that satisfy the above constraints, we
maximize the following objective function: $\sum_{\ell_i \in L} \sum_{0 \leq j
  \leq |E| - 2} x_i^j \cdot |E[j]|$, which is equivalent to the total
activity $t(\phi)$ of the induced labeling $\phi$.

This ILP considers only regular active ranges, since label activities
change states only at conflict events.  However, by~\cite[Lemma
4]{gnr-clrm-w-11}, there always exists an optimal solution that is
regular, and hence we are guaranteed to find a globally optimal
solution.

\paragraph{Minimizing the Number of Active Ranges in the ILP.}

In Section~\ref{sec:prelim} we explained that, in order to reduce
flickering, we require that each label has at most one active range.
However, we might be able to reduce flickering even more by finding
among all optimal solutions the one that has the fewest active ranges.
We can modify our ILP to accommodate for this by modifying the
objective function slightly.  Let~$s > 0$ denote the length of a
shortest atomic interval (recall that all atomic intervals are assumed
to have positive length).  Thus, whenever a label is active during an
atomic interval, the total activity increases by at least~$s$.  To
minimize the number of active ranges, we substract $s/2$ from the
objective function for each active range.  In this way, a solution
with larger total activity is always preferred over a solution with
less total activity, while among two solutions with the same total
activity the one with fewer active ranges has the greater objective
value.  Hence, to minimize the number of active ranges while
maintaining optimality, we modify the objective function to
$\sum_{\ell_i \in L}  \sum_{0 \leq j \leq |E| - 2} \left( x_i^j \cdot
  |E[j]| - b_i^j \cdot s/2 \right)$.  To ensure that if label~$\ell_i$
is active at any time, then at least one of the variables~$b_i^j$
is~$1$, we also add the following constraints.
\begin{align}
  \label{eq:1}
  (|E| - 1) \cdot \sum_{j=0}^{|E|-2} b_i^j \ge \sum_{j=0}^{|E|-2} x_i^j && \forall \ell_i \in L
\end{align}
Without this constraint, it would be possible to have a label~$\ell_i$
active for the whole range~$[0,2\pi)$ but with~$b_i^j = 0$ for all~$j
\in \{0,\dots,|E|-2\}$.  We note that it follows from the proof of
Lemma 4 in~\cite{gnr-clrm-w-11} that also for the
modified problem there always exists an optimal solution that is
regular.  Hence an optimal solution to the above ILP is indeed a
global optimum.

We have now given a complete description of our ILP model, and now
turn towards the analysis of the number of variables and constraints
necessary for our model.  Let~$e$ be the number of conflict events and
$c$ be the maximum number of conflict events per label in a \MaxTotal
instance, respectively.  In the worst case the number of constraints
that ensure that the solution is conflict-free (\ie,
constraint~\eqref{eq:ilp:2}) is $O(c \cdot e)$ per label, whereas we
require only $O(e)$ constraints of the other types of constraints per
label.  We conclude the results of this section in the following
theorem.

\begin{theorem}
  The ILP (1)--(3) solves \MaxTotal and has at most $O(e\cdot n)$
  variables and $O(c\cdot e\cdot n)$ constraints, where $n$ is the
  number of labels, $e$ the number of conflict
  events, and $c$ the maximum number of conflicts per label.
\end{theorem}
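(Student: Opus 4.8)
The plan is to establish the two stated bounds separately by accounting for the variables and constraints introduced in the ILP model, covering each type of variable and constraint described above.

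First I would count the variables. For each label $\ell_i \in L$ and each atomic interval $E[j]$, the model introduces exactly two variables, the activity variable $x_i^j$ and the active-range-start variable $b_i^j$. The number of atomic intervals is $|E| - 1$, and since $E$ consists of all conflict events together with $0$ and $2\pi$, we have $|E| = O(e)$, where $e$ is the total number of conflict events. Hence each label contributes $O(e)$ variables, and summing over all $n$ labels gives $O(e \cdot n)$ variables in total. This matches the first claim.

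Next I would count the constraints by type. The active-range-accounting constraints~\eqref{eq:ilp:1}, the cardinality constraints~\eqref{eq:ilp:3}, and (for the flicker-minimizing variant) constraint~\eqref{eq:1} each contribute $O(e)$ or fewer constraints per label, so $O(e \cdot n)$ in total for all of these types. The dominant term comes from the conflict constraints~\eqref{eq:ilp:2}: for each pair of conflicting labels and each atomic interval on which they are jointly in conflict, one such constraint is added. The key observation here is that a fixed label $\ell_i$ has at most $c$ conflict partners, and each conflict contributes conflict events that together bound the number of atomic intervals on which $\ell_i$ is in conflict with any partner by $O(e)$; more precisely, per label the number of conflict constraints is $O(c \cdot e)$ as argued in the paragraph preceding the theorem. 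Summing over all $n$ labels yields $O(c \cdot e \cdot n)$ constraints, and since this dominates the $O(e \cdot n)$ contribution from the other constraint types, the overall constraint count is $O(c \cdot e \cdot n)$.

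Finally, correctness (that the ILP indeed solves \MaxTotal) follows from combining the structural arguments already given in the model description: constraints~\eqref{eq:ilp:1} and~\eqref{eq:ilp:3} enforce the $k$R-model bound on active ranges, constraint~\eqref{eq:ilp:2} guarantees validity by forbidding two conflicting labels from being simultaneously active, the objective function equals the total activity $t(\phi)$ of the induced labeling, and by~\cite[Lemma 4]{gnr-clrm-w-11} there is always an optimal regular solution, so restricting to regular active ranges loses nothing. I expect the main subtlety to lie not in any deep argument but in the precise per-label bounding of the conflict constraints~\eqref{eq:ilp:2}; one must be careful that a single conflict range can span several atomic intervals, so the bound should be derived from the total number of atomic intervals $O(e)$ rather than naively from the number $c$ of conflicts alone, which is exactly why the per-label count is $O(c \cdot e)$ and not $O(c)$.
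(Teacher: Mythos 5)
Your proposal is correct and takes essentially the same approach as the paper: the paper's own justification (the paragraph preceding the theorem, together with the earlier remark on regular active ranges) likewise counts $O(e)$ variables and $O(e)$ non-conflict constraints per label, bounds the conflict constraints~\eqref{eq:ilp:2} by $O(c\cdot e)$ per label, and invokes Lemma~4 of~\cite{gnr-clrm-w-11} to conclude that restricting to regular active ranges still yields a global optimum. Your accounting is, if anything, slightly more explicit than the paper's.
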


\section{Experimental Evaluation}
\label{sec:exper-eval}
In this section we present the experimental evaluation of different
labeling strategies based on the consistency models and algorithms
introduced in Sections~\ref{sec:prelim} and~\ref{sec:algos}. We
implemented our algorithms in C++ and compiled with GCC~4.7.1 using
optimization level \texttt{-O3}. As ILP solver we used Gurobi
5.6\footnote{Gurobi is a commercial ILP solver
  \href{http://www.gurobi.com}{\texttt{www.gurobi.com}}}. The running
time experiments were performed on a single core of an AMD Opteron
2218 processor running Linux 2.6.34.10. The machine is clocked at
$\unit[2.6]{GHz}$, has $\unit[16]{GiB}$ of RAM and $\unit[2 \times
1]{MiB}$ of L2 cache.  Before we discuss our results we introduce the
benchmark instances.  The reported running times in the following are
always measured as wall-clock time (as opposed to pure CPU time).

\subsection{Benchmark Instances} 

Since our labeling problem is immediately motivated by dynamic mapping
applications, we focus on gathering real-world data for the
evaluation. As data source we used the publicly available data
provided by the OpenStreetMap project\footnote{OpenStreetMap
  \href{http://www.osm.org}{\texttt{www.osm.org}}}. We extracted the
latitudes, longitudes and names of \emph{all} cities with a population
of at least $50\,000$ for six countries (France, Germany, Italy,
Japan, United Kingdom, and the United States of America) and created
maps at three different scales.

To obtain a valid labeling instance several additional steps are
necessary. First, the width and height of each label need to be
chosen. Second, we need to map latitude and longitude to the
two-dimensional plane.  Third, recall that the input is a statically
labeled map, and hence we need to compute such a static input
labeling.
For the first issue we used the same font that is used in Google Maps,
\ie, \texttt{Roboto Thin}\footnote{Roboto Font
  \href{http://www.google.com/fonts/specimen/Roboto}{\texttt{www.google.com/fonts/specimen/Roboto}}}. The
dimensions of each label were obtained by rendering the label's
corresponding city name in \texttt{Roboto Thin} with font size~13,
computing its bounding box, and adding a small additional buffer.  For
obtaining two-dimensional coordinates from the latitude and longitude
of each point, we use the popular Mercator projection.  For the map
scales we again wanted to be close to Google Maps. Hence, we used the
Mercator projection (where we approximate the ellipsoid with a sphere
of radius $r = 6371$km) for three different scales (65 pixel
$\widehat{=}$ 20km, 50km, 100km) for each country.  For simplicity we
refer to the scale of 65 pixel $\widehat{=}$ 20km only by 20km (and
likewise for the remaining scales).  The last remaining step was to
compute a valid input labeling.  For this we used the 4P
fixed-position model~\cite{fw-ppalm-91} and solved a simple ILP model
to obtain a weighted maximum independent set in the label conflict
graph, in which any two conflicting label positions are linked by an
edge and weights are proportional to the population.
Table~\ref{tab:exp:input:countries} shows the characteristics of our
benchmark data.

We also obtained much larger instances than the country instances
described in Table~\ref{tab:exp:input:countries}. We chose Berlin, New
York, London, and Paris and extracted the names, as well as longitude
and latitude, of all restaurants in each of these cities from
OpenStreetMap.  To obtain valid input data we conducted the same steps
as described for the country instances with \texttt{Roboto Thin} but with
font size~8 and the three scales 65pixel $\widehat{=}$ 20m, 50m,
100m. We report the number of labels for these instances in
Table~\ref{tab:exp:input:restaurants}.

Both sets of benchmark data can be downloaded from our
website\footnote{Benchmark data set
  \href{http://i11www.iti.kit.edu/projects/dynamiclabeling/}{\texttt{i11www.iti.kit.edu/projects/dynamiclabeling/}}}.

\begin{table}[tb]
  \caption{Number of labels in each country instance, the number of
    labels in the  largest connected component (lcc) and
    the number of connected components (cc) in the conflict graph.}
\label{tab:exp:input:countries}
\centering
\begin{tabular}{@{}lllllll@{}}    \toprule
  \multicolumn{1}{c}{} & \multicolumn{6}{c}{\emph{countries}} \\
  & FR  & DE & GB & IT  & JP &  US \\ 
  \cmidrule(r){2-7}
  \emph{scales} & \multicolumn{6}{c}{\#labels (\#labels in lcc / \#cc) } \\\midrule
  20km         & 86 (12/51) & 52 (20/26)  & 99 (73/19) & 131 (28/48) &
  99 (12/34) &  403 (26/203) \\ 
  50km         & 80 (39/9)  & 43 (39/4)  & 68 (66/2)  & 111 (87/5) &
  80 (69/7) &  359 (88/89) \\
  100km        & 69 (69/1)  & 33 (33/1)  & 37 (37/1)  &  68 (68/1) &
  49 (44/3) &  288 (213/16) \\\bottomrule
\end{tabular}
\end{table}

\begin{table}[tb]
  \caption{Number of labels in each city instance, the number of
    labels in the  largest connected component (lcc) and
    the number of connected components (cc) in the conflict graph.}
\label{tab:exp:input:restaurants}
\center
\begin{tabular}{@{}lrrrrr@{}}    \toprule
  \multicolumn{1}{c}{} & \multicolumn{4}{c}{\emph{cities}} \\
  \cmidrule(r){2-5}
  & Berlin         & New York & London & Paris \\
  \cmidrule(r){2-5}
  \emph{scales} & \multicolumn{4}{c}{\#labels (\#labels in lcc / \#cc) } \\\midrule
  20m         & 2744 (13/2205) & 629 (9/445) & 1661 (40/1044) & 2367 (39/1560) \\
  50m         & 2739 (77/1329) & 621 (50/279)& 1620 (175/569) & 2325 (244/703)\\
  100m        & 2628 (416/634) & 602 (88/143)& 1457 (371/289) & 2141 (952/230)\\\bottomrule
\end{tabular}
\end{table}

\subsection{Evaluation of the Consistency Models}\label{sec:eval-model}

In this section we evaluate the different consistency models
introduced in Section~\ref{sec:prelim}. The models differ by the
admissible number of active ranges per label and the handling of hard
conflicts. We begin by analyzing the effect of limiting the number of
active ranges and consider the five models 0/1, 1R, 2R, 3R, and
$\infty$R, all taking hard conflicts into account. As discussed in
Section~\ref{sec:prelim}, the 0/1-model is flicker-free but expected
to have a low total activity, especially in dense instances. On the
other hand, the $\infty$R-model achieves the maximum possible total
activity in any valid labeling, but is likely to produce a large
number of flickering effects. Still, it serves as an upper bound on
the total activities of the other models. The two most important
quality criteria in our evaluation are (i) the total activity of the
solution, and (ii) the average length of the active
ranges. Unfortunately, it proved too time consuming for our ILP to
solve the city instances in a reasonable time frame. Thus, we chose to
restrict ourselves in this analysis to the country instances described
in Table~\ref{tab:exp:input:countries}.

\begin{table}[tb]
  \caption{Average total activity of the optimal solutions for the country instances with respect to the
    maximum possible objective value with standard deviation in brackets. Instances grouped by
    scale. Additionally we report the average interval length 
    normalized to one full rotation, and for
    the $\infty$R-model the average number of intervals per label.}
\label{tab:exp:nmbintervals:relative}

\begin{subtable}{\textwidth}
\centering
\begin{tabular}{@{}lrrrrrrrrrrrrrr@{}}    \toprule
  \multicolumn{1}{l}{\emph{model}} & \multicolumn{1}{c}{0/1} & &
  \multicolumn{2}{c}{1R} & & \multicolumn{2}{c}{2R} & &
  \multicolumn{2}{c}{3R} & & \multicolumn{2}{c}{$\infty$R} \\
  \cmidrule(r){2-2}  \cmidrule(r){4-5} \cmidrule(r){7-8}
  \cmidrule(r){10-11} \cmidrule(r){13-14}
  \emph{scale} &  \totact & & \totact  & $\sim$len & &  \totact  & $\sim$len & & \totact  & $\sim$len & & $\sim$len & $\sim$intervals \\ \midrule
20km &  54.04\% (12.62)  &   &  94.56\% (2.85)  &  0.76  &   &  99.36\% (0.64)  &  0.56  &   &  99.92\% (0.10)  &  0.47  &   &  0.08  &   19.13  \\
50km &  22.42\% (11.74)  &   &  87.79\% (4.44)  &  0.58  &   &  97.69\% (1.95)  &  0.35  &   &  99.54\% (0.69)  &  0.26  &   &  0.01  & 79.23  \\
100km &  6.19\% (5.06)  &   &  81.01\% (2.15)  &  0.44  &   &  95.83\% (1.56)  &  0.27  &   &  99.24\% (0.36)  &  0.19  &   &  0.01  &   128.40  \\ \bottomrule
\end{tabular}
\caption{Hard-conflict model.}
\label{tab:exp:hc:nmbintervals:relative}
\end{subtable}

\begin{subtable}{\textwidth}
\centering
\begin{tabular}{@{}lrrrrrrrrrrrrrr@{}}    \toprule
  \multicolumn{1}{l}{\emph{model}} & \multicolumn{1}{c}{0/1} & &
  \multicolumn{2}{c}{1R} & & \multicolumn{2}{c}{2R} & &
  \multicolumn{2}{c}{3R} & & \multicolumn{2}{c}{$\infty$R} \\
  \cmidrule(r){2-2}  \cmidrule(r){4-5} \cmidrule(r){7-8}
  \cmidrule(r){10-11} \cmidrule(r){13-14}
  \emph{scale} &  \totact & & \totact  & $\sim$len & &
   \totact  & $\sim$len & & \totact  & $\sim$len & & $\sim$len &
   $\sim$intervals \\ \midrule
20km &  69.18\% (8.14)  &   &  98.17\% (1.20)  &  0.83  &   &  99.79\% (0.19)  &  0.64  &   &  99.98\% (0.02)  &  0.59  &   &  0.54  &     1.80  \\
50km &  49.47\% (5.54)  &   &  94.80\% (1.96)  &  0.69  &   &  99.27\% (0.69)  &  0.44  &   &  99.86\% (0.22)  &  0.37  &   &  0.27  &     5.09  \\
100km &  42.41\% (3.12)  &   &  91.67\% (1.59)  &  0.58  &   &  98.57\% (0.40)  &  0.34  &   &  99.76\% (0.07)  &  0.24  &   &  0.07  &    9.55  \\ \bottomrule
\end{tabular}
\caption{Soft-conflict model.}
\label{tab:exp:sc:nmbintervals:relative}
\end{subtable}
\end{table}

In Table~\ref{tab:exp:hc:nmbintervals:relative} we report the total
activity of the optimal solution for the tested models relative to the
solution in the $\infty$R-model. The results of the instances are
aggregated by scale.  We observe that the total activity of the
0/1-model drops to less than 55\% compared to the optimal solution in
the $\infty$R-model even for the least dense instance at scale 20km
and to only 6\% for a scale of 100km. Hence this model is of very
little interest in practice.

We see a strong increase in the average total activity values already
for the 1R-model compared to the optimal solution in the
0/1-model. For the large-scale instance 20km 1R reaches almost 95\% of
the $\infty$R-model, which has more than 19 times the number of
flickering effects and active ranges of average length shorter by a
factor of $1/9$. For map scales of 50km and 100km, the total
activities drop to 88\% and 81\%, respectively, but at the same time
the number of flickering effects and the average active range lengths
in the $\infty$R model are extremely poor. Thus the 1R-model achieves
generally a very good labeling quality by using only one active range
per label.

Finally, we take a look at the middle ground between the 1R- and the
$\infty$R-models. It turns out that total activity of the 2R-model is
off from the $\infty$R-model by less than 1\% at scale 20km and less
than 5\% at scale 100km, but this increase in activity over the 1R
model comes at the cost of producing twice as many flickering effects
and decreasing the average active range length by 30--40\%. If we
allow three active ranges per label, the total activity increases to
more than 99\% of the upper bound in the $\infty$R-model at all three
scales, while having significantly fewer flickering effects and longer
average active ranges. The activity gain by considering the $k$R-model
for $k>3$ is negligible and the disadvantage of increasing the number
of flickering effects dominates.

When we conduct the same analysis for the 0/1, 1R, 2R, and $\infty$R
model in the soft-conflict model, similar trends can be observed.  We
report in Table~\ref{tab:exp:sc:nmbintervals:relative}, analogous to
Table~\ref{tab:exp:hc:nmbintervals:relative}, the total activity of
the optimal solution for the tested models relative to the solution in
the $\infty$R-model. We observe that the trend in both tables is
similar. However, the 0/1-model performs in general significantly
better than in the hard-conflict model, but the results are still not
suitable from a practical point of view.  Already the 1R model
produces results which are close to the maximum possible solution in
the $\infty$R-model, while the difference in the optimal solutions in
the 2R and 3R-model to the $\infty$R-model are negligible.

We conclude that the 1R-model achieves the best compromise between
total activity value and low flickering, at least for maps at larger
scales with lower feature density. For dense maps the 2R- or even the
3R-models yield near-optimal activity values while still keeping the
flickering relatively low. Going beyond three active ranges per label
only creates more flickering but does not provide noticeable
additional value.

It remains to investigate the impact of hard conflicts.  For this we
apply the 1R-model and compare the variant where all conflicts are
treated equally (soft-conflict model) with the variant where hard
conflicts are disallowed (hard-conflict model).  For this we consider
for each map scale the average relative increase in activity value of
the soft-conflict model over the stricter hard-conflict model. For
20km instances the increase is on average $8.51\%$ (standard deviation
2.91), for the intermediate scale 50km it is on average $19.25\%$
(standard deviation 7.86), and for the small-scale map 100km the
increase reaches on average $31.9\%$ (standard deviation 4.72). These
results indicate that, unsurprisingly, the soft-conflict model
improves the total activity at all scales, and in particular for dense
configurations of point features, where labels usually have several
hard conflicts with nearby features. As discussed before, this
improvement comes at the cost of temporarily occluding unlabeled but
possibly important points. It is an interesting open usability
question to determine user preferences for the two models and the
actual effect of temporary point occlusions on the readability of dynamic maps, but such a user
study is out of scope of this evaluation and left as an interesting
direction for future work.

\subsection{Evaluation of the Algorithms}
\label{sec:exper-eval:algorithms}

In this section we evaluate the quality (total activity) and running
time of the \quarter-approxi\-mation algorithm
and the three greedy heuristics
\GreedyMax, \GreedyLowCost, and \GreedyBestRatio
(Section~\ref{sec:heuristics-maxtotal}), which we abbreviate as \QAPX,
\GM, \GLC, and \GBR, respectively. Additionally, we include the ILP
(Section~\ref{sec:ilp:maxtotal}) as the only exact method in the
evaluation. The ILP is also applied to optimally solve the independent
subinstances in the grid cells created by \QAPX. In our implementation
we heuristically improve the running time of the ILP by partitioning
the conflict graph of the labels into its connected components and
solving each connected component individually; see Table
\ref{tab:exp:input:countries} for the number of labels in the largest
connected component and the number of connected components in the
conflict graph of each instance. For the ILP we set a time limit of
1~hour and restrict the ILP solver to a single thread. The same
restrictions are applied to the ILP when solving the small
subinstances in algorithm \QAPX. By the design of the algorithm, a
solution obtained by \QAPX will consist of many labels that have no
active range, although they could be assigned one (all labels that are
discarded to obtain independent cells have active range set to length
0). To overcome this drawback, we propose a combination of \QAPX with
the greedy algorithms.  More specifically, we apply one of our greedy
algorithms to each of the four solutions computed by the
\quarter-approximation and determine among the four resulting
solutions the best one.
In the following we refer to the combination of the
\quarter-approximation with a greedy algorithm by adding a \textsf{Q}
in front of the greedy algorithm's name (\eg, \QGLC).  We report the
results of the algorithms for both the 1R soft-conflict and the 1R
hard-conflict model, which turned out as a reasonable compromise
between low flickering and high total activity in
Section~\ref{sec:eval-model}.

\begin{figure}
  \begin{subfigure}{0.49\textwidth}
   \centering
   \includegraphics[width =\textwidth, trim = 30 25 30 20, clip]{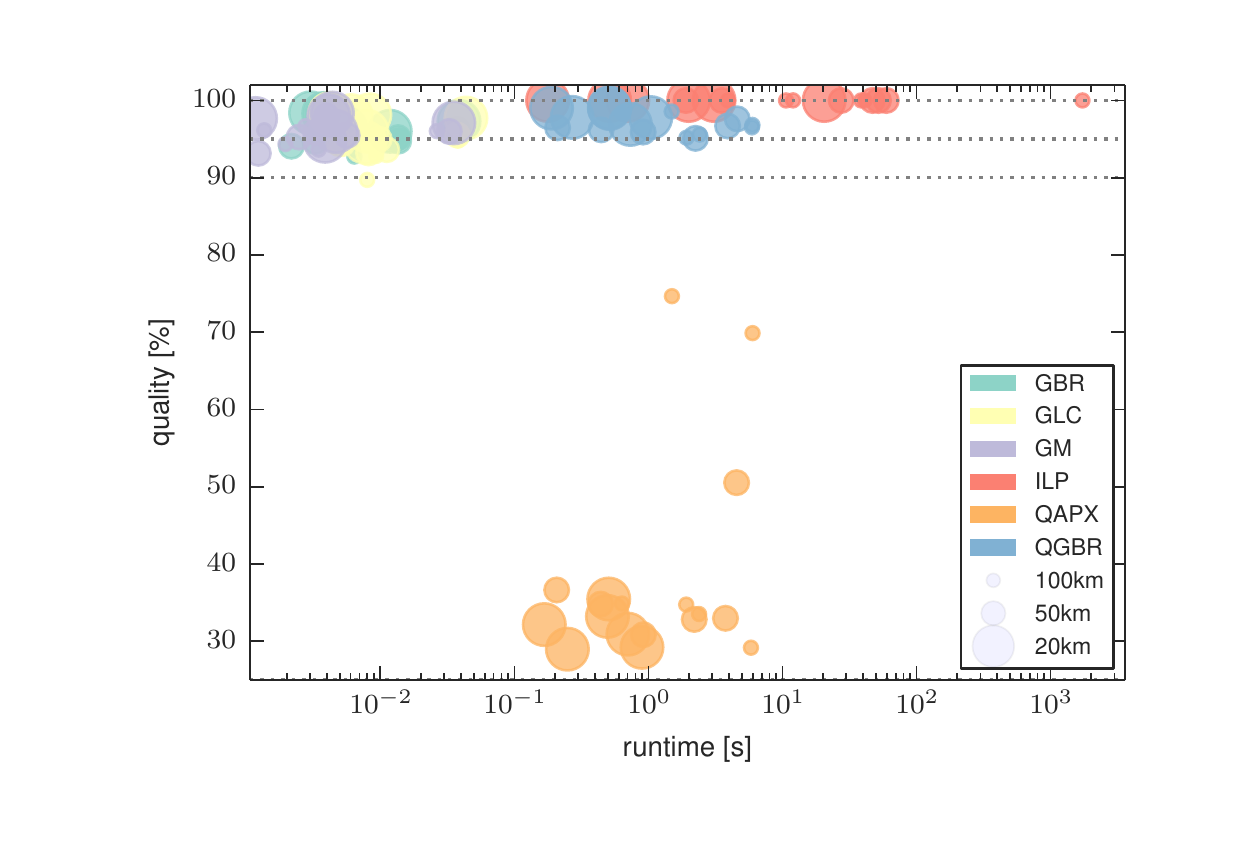} 
   \caption{Hard-conflict model.}
   \label{fig:exp:runtime:countries:hc}
  \end{subfigure}
  \begin{subfigure}{0.49\textwidth}
   \centering
   \includegraphics[width =\textwidth, trim = 30 25 30 20, clip]{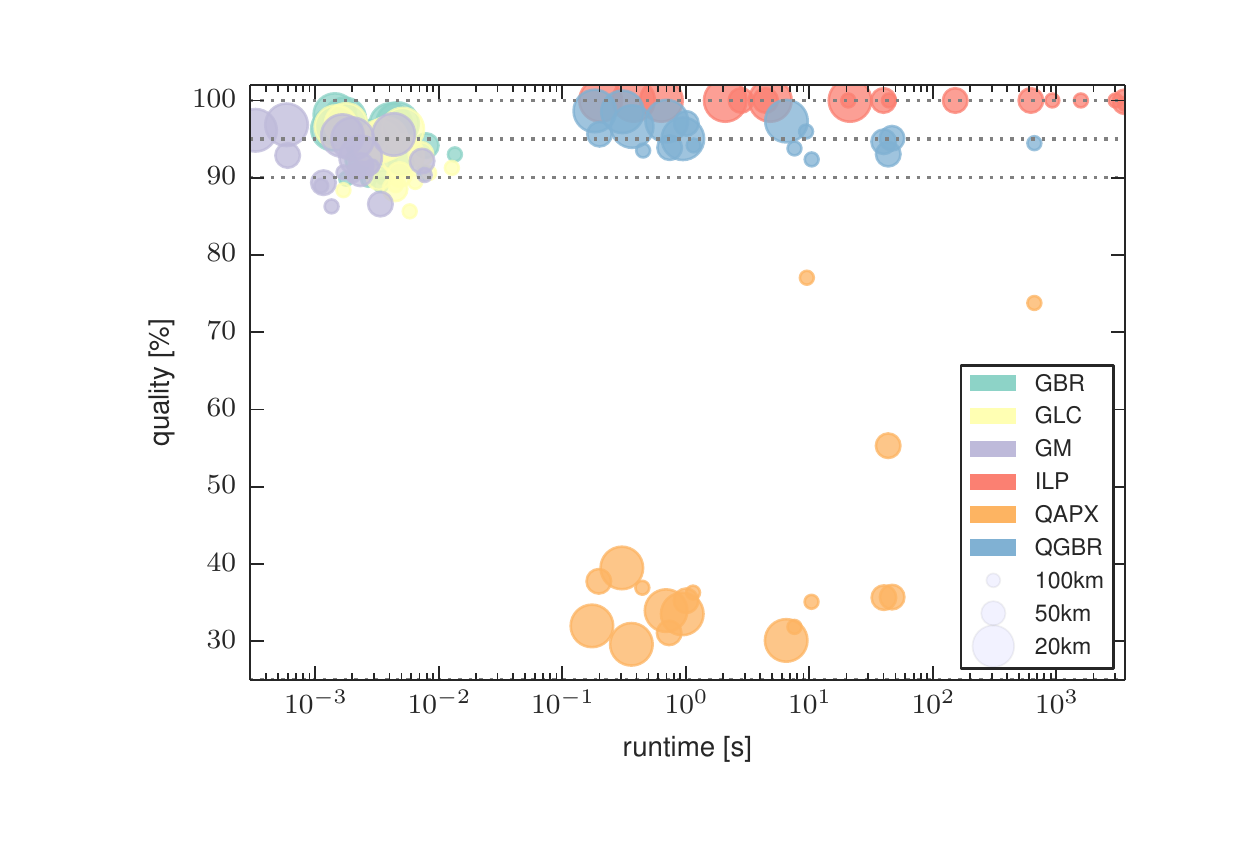} 
   \caption{Soft-conflict model.}
   \label{fig:exp:runtime:countries:sc}
  \end{subfigure}

  \caption{Plot of the runtime (log scale) and quality results of the
    algorithm evaluation for the city instances in the 1R
    hard-conflict and soft-conflict model.}
  \label{fig:exp:runtime:countries}
\end{figure}
We give a general overview of the performance of all
evaluated algorithms as a scatter plot in
Figure~\ref{fig:exp:runtime:countries}. In this scatter plot each
disk represents the result of an algorithm (indicated by color)
applied to a single country instance. The size of the circle indicates
the scale of the instance (the smaller the circle, the smaller the
scale).  We omitted the algorithms \QGM and \QGLC in this plot to
increase readability, because the difference in running time and
quality of the solutions between the three algorithms \QGM, \QGLC, and
\QGBR is negligible and creates extra overplotting.

\paragraph{Soft-conflict model.} In Figure~\ref{fig:exp:runtime:countries:sc} we present an overview of the performance of
the evaluated algorithms in the 1R soft-conflict model.
We observe that the performance of the greedy algorithms is very good
with respect to running time as well as quality of the solutions. As
expected, the total activity of \QAPX is always better than $25\%$,
but generally much worse than for the remaining algorithms. It never
gets close to the solutions produced by the greedy algorithms while
being considerably slower.  However, combining \QAPX with a greedy
algorithm achieves better solutions than greedy algorithms and \QAPX
alone, while the increase in running time over \QAPX is negligible.
Finally, we observe that the \ILP solves the tested instances in a
reasonable time frame.  To obtain the optimal solution, the ILP
required on average 758s, with a median of only 30.65s. However, we
concede that larger instances may require significantly more time to
solve, and there may be a threshold for which the use of the ILP is
infeasible.

\begin{figure}[tb]
  \begin{subfigure}[t]{.55\textwidth}
  \centering
  \includegraphics[height = 3.5cm, trim = 12 15 20 10, clip]{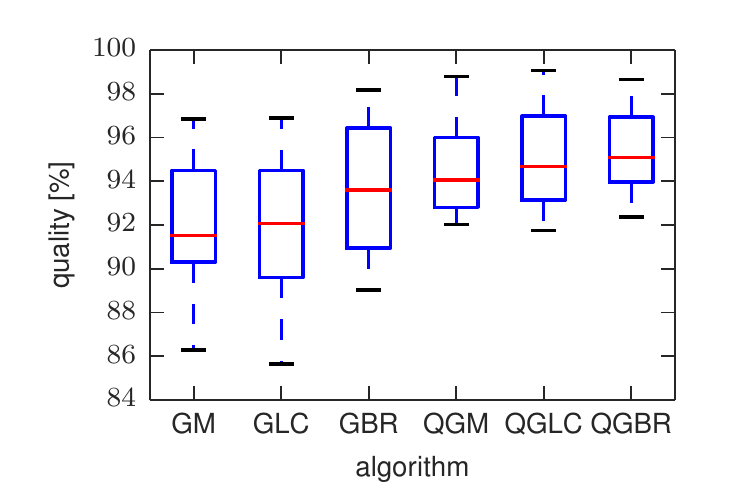} 
  \caption{Quality of the solutions as a percentage of the (optimal)
    ILP solution.}
  \label{fig:exp:runtime:countries:quality}
  \end{subfigure}
  \hfil
  \begin{subfigure}[t]{.39\textwidth}
  \centering
  \includegraphics[height = 3.5cm, trim= 10 15 20 10, clip]{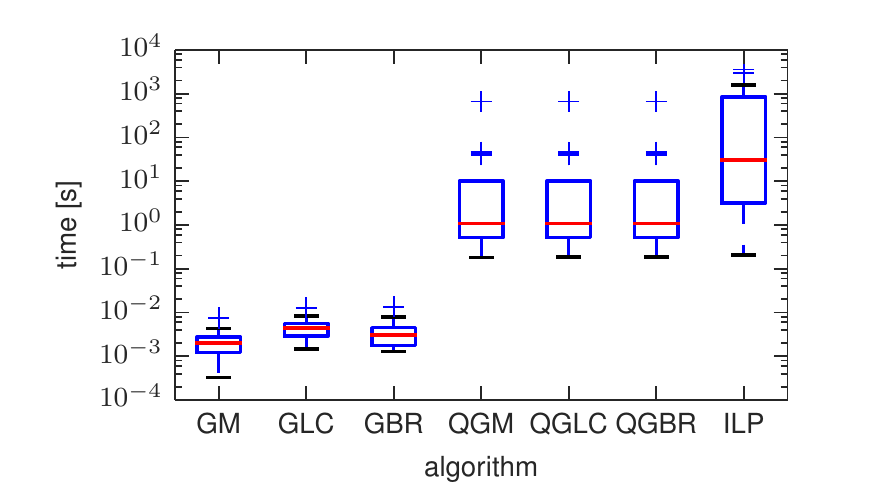} 
  \caption{Runtime (log scale) of the algorithms.}
  \label{fig:exp:runtime:countries:time}
  \end{subfigure}

	\caption{Performance of the greedy algorithms and \QAPX with
          greedy postprocessing in the 1R soft-conflict model.}
  \label{fig:exp:runtime:countries:scmodel}
\end{figure}

We now turn to a more detailed analysis of the two most promising
approaches (i) using the greedy algorithms, and (ii) combining \QAPX
with the greedy algorithms in the soft-conflict model. For a detailed depiction of the
performance of the algorithms with respect to the quality of the
solution see the diagrams in
Figure~\ref{fig:exp:runtime:countries:scmodel}. We observe that among
the three greedy algorithms \GBR performs best with respect to quality
with an average of $93.7\%$, but the difference to the other greedy
algorithms is small. Even the greedy algorithm \GM with the lowest
total activity produces solution with an average of 91.8\% of the
optimal solution. Each of the combinations of \QAPX with subsequent
execution of a greedy algorithm outperforms each of the greedy
algorithms alone in terms of quality. However, since the solutions
produced by the greedy algorithms are already very close to the
optimal solution, we observe only a slight increase in total activity
for \QGM, \QGLC, and \QGBR over the greedy algorithms. The difference
between both approaches becomes much more visible when considering the
running time. While the average running time for the three greedy
algorithms is between 2.5ms and 3.9ms, the average running time for
the \quarter-approximation algorithms is roughly 46s. However, we note
that this large difference is mostly caused by one instance, which
required over 664s to find the solution. The median running time for
the enriched \quarter-approximation algorithms is about~1.08s.

\paragraph{Hard-conflict model.}
In Figure~\ref{fig:exp:runtime:countries:hc} we present a general
overview of the performance of the evaluated algorithms in the
1R hard-conflict model. Again, we
omit \QGLC, and \QGM in the figure to increase readability.

The performance characteristics of the algorithms in this model
resembles those in the soft-conflict model closely. The greedy
algorithms perform, again, very well with respect to both running time
and total activity.  The combination of \quarter-approximation and
greedy heuristics outperforms the greedy heuristics in total activity
again, but is also significantly slower.  Only the performance of ILP
differs much from its performance in the soft-conflict model.  To
obtain an optimal solution the ILP required on average 114s, with a
median of only 11.32s (compared to an average of 758s with a median of
30s). This is most likely caused by a decreased size in solution
space, since in the hard-conflict model, some of the ILP's activity
variables are already initially required to be set to~0.

We again discuss the two most promising approaches in more detailed,
but this time for the hard-conflict model, see Figure~\ref{fig:exp:runtime:countries:hcmodel}. Among the three greedy
heuristics the simplest algorithm (\ie, \GM) performs best with
respect to quality with an average of $96\%$, but as a whole the
greedy algorithms perform similarly well. Even the worst greedy
algorithm \GLC produces solutions of high quality (average is
95\%). Each of the combinations of the \quarter-approximation with
subsequent execution of a greedy algorithm outperforms each of the
greedy algorithms alone. However, since the solutions produced by the
greedy algorithms are so close to the optimal solution (even closer
than in the soft conflict model), we observe only a slight increase in
the quality of the solution for \QGM, \QGLC, and \QGBR. Again, the
difference is much more noticeable when considering the running time.
While the average running time for the three greedy algorithms is
between 7ms and 13ms, the average running time for the
\quarter-approximation is roughly 1.8s with a median of 0.96s. We note
an increase in the running time of the greedy algorithms compared to
the soft-conflict model. However, all three algorithms are still very
fast and suitable for real-time applications. For the combination of
the \quarter-approximation and the greedy algorithms, with the
exception of one instance, the running time of the
\quarter-approximation in the 1R hard-conflict model is very similar
to the running time in the 1R soft-conflict model. This might be
surprising since the running time of the \quarter-approximation is
dominated by obtaining optimal solutions for all of the subinstances
by the ILP, which performs better in the hard-conflict model. We
conjecture that this is because the subinstances required for the
\quarter-approximation are very small and thus the performance
increase is not that noticeable.

\begin{figure}[tb]
  \begin{subfigure}[t]{.55\textwidth}
  \centering
  \includegraphics[height = 3.5cm, trim = 12 15 20 10, clip]{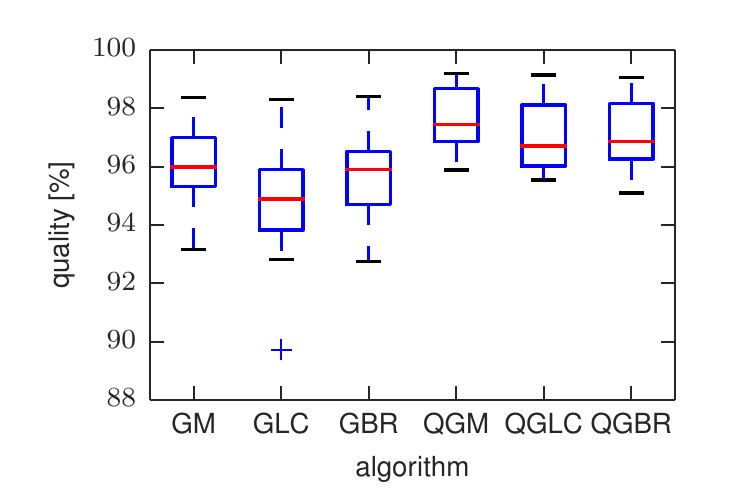} 
  \caption{Quality of the solutions as a percentage of the (optimal)
    ILP solution.}
  \label{fig:exp:runtime:countries:quality:hc}
  \end{subfigure}
  \hfil
  \begin{subfigure}[t]{.39\textwidth}
  \centering
  \includegraphics[height = 3.5cm, trim= 10 15 20 10, clip]{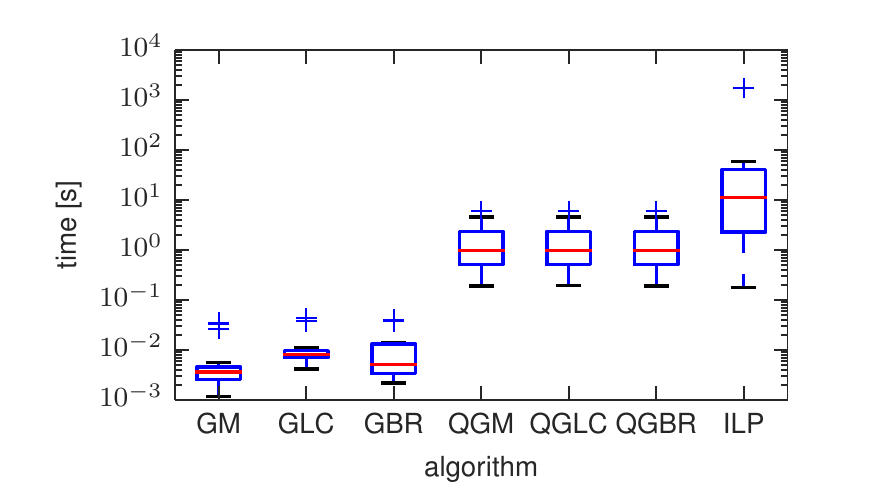} 
  \caption{Runtime (log scale) of the algorithms.}
  \label{fig:exp:runtime:countries:time:hc}
  \end{subfigure}

	\caption{Performance of the greedy algorithms and \QAPX with
          greedy postprocessing in the 1R hard-conflict model.}
  \label{fig:exp:runtime:countries:hcmodel}
\end{figure}

\paragraph{Additional Instances.}
Recall that we presented besides the set of country instances also the
city instances, which we did not use for evaluation so far. Note that
these instances (described in Table~\ref{tab:exp:input:restaurants})
are much larger than the country instances. Unfortunately, for almost
all of these instances we were not able to obtain optimal solutions in
the 1R-model with the ILP in a reasonable time frame. We therefore
cannot compare the quality of the obtained solutions by the greedy
algorithms, and by the combination of \quarter-approximation and
greedy algorithms with the optimal solution. Instead, we chose to
compare the results of the algorithms with the optimal solution in the
$\infty$R-model, which serves as an upper bound on the optimal
solution in the 1R-model; see
Table~\ref{tab:appendix:eval:restaurants}.

\begin{table}[tb]
\center
\caption{Evaluation of the three greedy algorithms, and the combination
  of \quarter-approximation with the greedy algorithms for the
  restaurant instances; see Table~\ref{tab:exp:input:restaurants}. We
  compare the total activity of the solutions with the optimal
  solution in the
  $\infty$R-model. Table~\subref{tab:appendix:eval:restaurants:sc}
  reports the results for the 1R soft-conflict model, and
  Table~\subref{tab:appendix:eval:restaurants:hc} reports the results
  for the 1R hard-conflict.}
\label{tab:appendix:eval:restaurants}
\begin{subtable}{\textwidth}
\center
\caption{1R soft-conflict model.}
\label{tab:appendix:eval:restaurants:sc}
\begin{tabular}{@{}lrrrrrrrrrrrr@{}}
\toprule
 & \multicolumn{11}{c}{\emph{algorithms}} \\
 \cmidrule(r){2-13}
 & \multicolumn{2}{c}{GM} & \multicolumn{2}{c}{GLC} & \multicolumn{2}{c}{GBR} & \multicolumn{2}{c}{QGM} & \multicolumn{2}{c}{QGLC} & \multicolumn{2}{c}{QGBR}\\
\cmidrule(r){2-3} \cmidrule(r){4-5} \cmidrule(r){6-7} \cmidrule(r){8-9} \cmidrule(r){10-11} \cmidrule(r){12-13}
\emph{scales} & time & \totact & time & \totact & time & \totact &
time & \totact & time & \totact & time & \totact \\\midrule
20m &  0.03s & 97.69\% & 0.03s & 98.01\% & 0.03s & 98.58\% & 1.30s & 98.21\% & 1.30s & 98.44\% & 1.31s & 98.70\% \\
50m &  0.03s & 93.88\% & 0.04s & 94.45\% & 0.04s & 95.52\% & 238.50s & 95.04\% & 238.52s & 95.52\% & 238.52s & 95.82\% \\
100m &  0.03s & 89.36\% & 0.05s & 89.90\% & 0.06s & 91.39\% & 1124.23s & 91.18\% & 1124.28s & 91.33\% & 1124.29s & 92.02\% \\
\bottomrule
\end{tabular}
\end{subtable}

\begin{subtable}{\textwidth}
\center
\caption{1R hard-conflict model.}
\label{tab:appendix:eval:restaurants:hc}
\begin{tabular}{@{}lrrrrrrrrrrrr@{}}
\toprule
 & \multicolumn{11}{c}{\emph{algorithms}} \\
 \cmidrule(r){2-13}
 & \multicolumn{2}{c}{GM} & \multicolumn{2}{c}{GLC} &
 \multicolumn{2}{c}{GBR} & \multicolumn{2}{c}{QGM} &
 \multicolumn{2}{c}{QGLC} & \multicolumn{2}{c}{QGBR}\\
\cmidrule(r){2-3}
\cmidrule(r){4-5}
\cmidrule(r){6-7}
\cmidrule(r){8-9}
\cmidrule(r){10-11}
\cmidrule(r){12-13}
\emph{scales} & time & \totact & time & \totact & time & \totact &
time & \totact & time & \totact & time & \totact \\\midrule
20m &  0.72s & 97.23\% & 0.80s & 97.20\% & 0.80s & 97.39\% & 4.29s & 97.53\% & 4.30s & 97.55\% & 4.29s & 97.58\% \\
50m &  0.79s & 92.45\% & 0.86s & 92.21\% & 0.87s & 92.53\% & 16.91s & 92.99\% & 16.94s & 92.93\% & 16.94s & 92.92\% \\
100m &  0.78s & 87.05\% & 0.85s & 86.09\% & 0.86s & 86.61\% & 54.50s &
87.56\% & 54.56s & 86.96\% & 54.58s & 87.17\% \\
\bottomrule
\end{tabular}
\end{subtable}
\end{table}

We observe that the results for all algorithms in both models are very
close to the maximal achievable total activity in the $\infty$R
model. Even in the most dense instance in the 1R hard-conflict model,
the worst greedy algorithm achieves still on average about 86\% of the
maximal achievable total activity with a running time of about
0.8s. The quality of the greedy algorithms is slightly better in the
1R soft-conflict model and the running time is much smaller (around
30--40ms).  The combination of \quarter-approximation and greedy
heuristic does perform slightly better, but the difference is
minuscule. However, the running time differs drastically. While in the
country instances before, the average running time was less than 1s in
both the 1R hard-conflict and the 1R soft-conflict model, the running
time for the combination of \quarter-approximation and greedy
heuristic is much slower. For the most dense instances the running
time is on average one minute in the hard-conflict model and for the
soft-conflict model about 20 minutes. However, at this point we need
to mention that the running times for the instances vary quite
significantly and the only conclusion we can draw safely from the
detailed data is that this approach is much slower (even for the
easiest instance the approach required for the soft-conflict model
about 1 minute to compute the solution), but drawing conclusions
beyond this is impossible.

The results indicate that the greedy algorithms in the 1R
soft-conflict model produce solutions that are very close to the
maximum possible total activity while the algorithms require only few
milliseconds in running time.  This strengthens our conclusion that the
1R  model in combination with any of the three greedy
algorithms is the best strategy for labeling rotating maps. Whether to use the soft-conflict model or the hard-conflict model is a design choice that should depend on the requirements of the actual application.

In order to give a final recommendation for an algorithm, it is
necessary to make a choice on the time-quality trade-off that is
acceptable in a particular situation. If running time is not the
primary concern, e.g., for offline applications with high computing
power available, we can recommend the ILP, which ran reasonably fast
in our experiments, at least for the smaller country instances. On the other hand, if computing power is limited, instances are large,
or real-time labeling is necessary, e.g., on a mobile device, all
three greedy heuristics can be recommended as the methods of choice; a
slight advantage of \GBR was observed in our experiments. All three
algorithms run very fast (a few milliseconds) and empirically produce
high activities of more than $90\%$ of the optimum solution. If one
wants to invest some extra running time, the combination of
\QAPX with a greedy algorithm may be of interest as it produces
slightly better solutions than the stand-alone greedy algorithms and is much faster than the ILP.

\section{Conclusion}
\label{sec:conclusion}
In this work, we evaluated different strategies for labeling dynamic
maps that allow continuous rotation, where a labeling strategy consists of a
consistency model and a labeling algorithm. In the first part of the evaluation, we considered the quality of optimal solutions in different consistency models. It turned out that the restriction to  one or two active ranges per label (1R- and 2R-models) yields the best compromise in terms of low flickering and high total activity value of more than 95\% of the upper bound obtained from the unrestricted model ($\infty$R). Additionally, treating all pairwise label conflicts as soft conflicts increased the total activity values between 8\% and 32\% at the cost of occasional occlusion of unlabeled point features.

In the second part of the evaluation, we investigated the performance
of three new greedy heuristics and our previous \quarter-approximation
algorithm~\cite{gnr-clrm-w-11} in terms of labeling quality and
running time. It turned out that the greedy heuristics performed very
well in both total activity (well above 90\%) and running time (a few
ms). The unmodified \quarter-approximation performs much worse, but
the combination of \quarter-approximation and greedy heuristics yields
slightly higher total activity than the greedy heuristics alone; the
running time, however, can grow to several seconds. In conclusion, we
believe that the 1R model in combination with any of the three greedy
algorithms is, in most cases, the best labeling strategy for labeling
dynamic rotating maps. Whether the soft-conflict  or the
hard-conflict model is more appropriate depends on requirements of the application.

\subsection*{Acknowledgments}

We thank an anonymous reviewer of the SEA 2014 version of this paper for the helpful and detailed comments. This work was partially supported by a Google Research Award.

\end{document}